\newcommand{\argemp}[2]{\if&#1&\else#2\fi}
\newcommand{\argdef}[2]{\if&#1&#2\else#1\fi}
\newcommand{\argint}[3]{\if&#2&\else#1#2#3\fi}
\newcommand{\argext}[3]{\if&#1&#3\else#1\if&#3&\else#2#3\fi\fi}
\newcommandx{\mthfnt}[3][1=, 2=0]{{
	\IfStrEqCase{#1}
	{%
		{}%
		{#3}%
		{Name}%
		{%
			\IfStrEqCase{#2}
			{%
				{0}{\mathcal{#3}}%
				{1}{\mathscr{#3}}%
				{2}{\mathfrak{#3}}%
				{3}{\mathbb{#3}}%
			}
			[\ensuremath{\clubsuit}]%
		}%
		{Set}%
		{%
			\IfStrEqCase{#2}
			{%
				{0}{\mathrm{#3}}%
				{1}{\mathsf{#3}}%
				{2}{\mathbb{#3}}%
				{3}{\mathbf{#3}}%
			}
			[\ensuremath{\clubsuit}]%
		}%
		{Fun}%
		{%
			\IfStrEqCase{#2}
			{%
				{0}{\mathsf{#3}}%
				{1}{\mathrm{#3}}%
			}
			[\ensuremath{\clubsuit}]%
		}%
		{Rel}%
		{%
			\IfStrEqCase{#2}
			{%
				{0}{\mathit{#3}}%
				{1}{\mathtt{#3}}%
			}
			[\ensuremath{\clubsuit}]%
		}%
		{Sym}%
		{%
			\IfStrEqCase{#2}
			{%
				{0}{\mathtt{#3}}%
				{1}{\mathbf{#3}}%
			}
			[\ensuremath{\clubsuit}]%
		}%
		{Elm}%
		{\mathnormal{#3}}
	}
[\ensuremath{\clubsuit}]%
}}
\newcommand{\mthsub}[1]{\argemp{#1}{\ensuremath{_{\mathnormal{#1}}}}}
\newcommand{\mthsup}[1]{\argemp{#1}{\ensuremath{^{\mathnormal{#1}}}}}
\newcommandx{\mth}[5][1=, 2=0, 4=, 5=]{{\ensuremath{\mthfnt[#1][#2]{#3}\mthsub{#4}\mthsup{#5}}}}
\newcommandx{\mtharg}[6][1=, 2=0, 4=, 5=]{{\mth[#1][#2]{#3}[#4][#5]\ensuremath{\argint{(}{#6}{)}}}}
\newcommand{\mthempty}{\mth[][]}
\newcommand{\mthstyname}{0}
\newcommand{\mthname}[1][]{\mth[Name][\argdef{#1}{\mthstyname}]}
\newcommand{\mthstyset}{0}
\newcommand{\mthset}[1][]{\mth[Set][\argdef{#1}{\mthstyset}]}
\newcommand{\mthstyfun}{0}
\newcommand{\mthfun}[1][]{\mth[Fun][\argdef{#1}{\mthstyfun}]}
\newcommand{\mthargfun}[1][]{\mtharg[Fun][\argdef{#1}{\mthstyfun}]}
\newcommand{\mthstysym}{0}
\newcommand{\mthsym}[1][]{\mth[Sym][\argdef{#1}{\mthstysym}]}
\newcommand{\mthstyelm}{0}
\newcommand{\mthelm}[1][]{\mth[Elm][\argdef{#1}{\mthstyelm}]}
\newcommand{\tuple}[1]
{\ensuremath{\!\argint{\langle}{#1}{\rangle}}}
\newcommand{\txtfnt}[2][]
{{%
		\IfStrEq{#1}{}
		{#2}
		{%
			\StrLeft{#1}{2}[\optbgn]%
			\StrGobbleLeft{#1}{2}[\optend]%
			\IfStrEqCase{\optbgn}
			{%
				{Rm}{\rmfamily\txtfnt[\optend]{#2}}%
				{Sf}{\sffamily\txtfnt[\optend]{#2}}%
				{Tt}{\ttfamily\txtfnt[\optend]{#2}}%
				{Up}{\upshape\txtfnt[\optend]{#2}}%
				{It}{\itshape\txtfnt[\optend]{#2}}%
				{Sl}{\slshape\txtfnt[\optend]{#2}}%
				{Sc}{\scshape\txtfnt[\optend]{#2}}%
				{Md}{\mdseries\txtfnt[\optend]{#2}}%
				{Bf}{\bfseries\txtfnt[\optend]{#2}}%
				{Em}{\emph{\txtfnt[\optend]{#2}}}%
			}
			[\ensuremath{\clubsuit}]%
		}%
	}}
	\newcommand{\txtsub}[2][]
	{\argemp{#2}{\ensuremath{_{\text{\txtfnt[#1]{#2}}}}}}
	\newcommand{\txtsup}[2][]
	{\argemp{#2}{\ensuremath{^{\text{\txtfnt[#1]{#2}}}}}}
	\newcommandx{\txt}[4][1=, 3=, 4=]
	{%
		\ensuremath{\text{%
				\txtfnt[#1]{#2}\ensuremath{\txtsub[#1]{#3}\txtsup[#1]{#4}}%
			}}%
		}
		\newcommandx{\txtarg}[5][1=, 3=, 4=]
		{{\txt[#1]{#2}[#3][#4]\argint{(}{#5}{)}}}
		\newcommand{\txtstyname}{RmScMd}
		\newcommand{\txtname}[1][]
		{\txt[\argdef{#1}{\txtstyname}]}
		\newcommand{\txtargname}[1][]
		{\txtarg[\argdef{#1}{\txtstyname}]}
			\newcommandx{\AName}[4][1=, 2=, 3=, 4=]{\mthname[#4]{A#3}[#1][#2]}
			\newcommandx{\BName}[4][1=, 2=, 3=, 4=]{\mthname[#4]{B#3}[#1][#2]}
			\newcommandx{\CName}[4][1=, 2=, 3=, 4=]{\mthname[#4]{C#3}[#1][#2]}
			\newcommandx{\DName}[4][1=, 2=, 3=, 4=]{\mthname[#4]{D#3}[#1][#2]}
			\newcommandx{\EName}[4][1=, 2=, 3=, 4=]{\mthname[#4]{E#3}[#1][#2]}
			\newcommandx{\FName}[4][1=, 2=, 3=, 4=]{\mthname[#4]{F#3}[#1][#2]}
			\newcommandx{\GName}[4][1=, 2=, 3=, 4=]{\mthname[#4]{G#3}[#1][#2]}
			\newcommandx{\HName}[4][1=, 2=, 3=, 4=]{\mthname[#4]{H#3}[#1][#2]}
			\newcommandx{\IName}[4][1=, 2=, 3=, 4=]{\mthname[#4]{I#3}[#1][#2]}
			\newcommandx{\JName}[4][1=, 2=, 3=, 4=]{\mthname[#4]{J#3}[#1][#2]}
			\newcommandx{\KName}[4][1=, 2=, 3=, 4=]{\mthname[#4]{K#3}[#1][#2]}
			\newcommandx{\LName}[4][1=, 2=, 3=, 4=]{\mthname[#4]{L#3}[#1][#2]}
			\newcommandx{\MName}[4][1=, 2=, 3=, 4=]{\mthname[#4]{M#3}[#1][#2]}
			\newcommandx{\NName}[4][1=, 2=, 3=, 4=]{\mthname[#4]{N#3}[#1][#2]}
			\newcommandx{\OName}[4][1=, 2=, 3=, 4=]{\mthname[#4]{O#3}[#1][#2]}
			\newcommandx{\PName}[4][1=, 2=, 3=, 4=]{\mthname[#4]{P#3}[#1][#2]}
			\newcommandx{\QName}[4][1=, 2=, 3=, 4=]{\mthname[#4]{Q#3}[#1][#2]}
			\newcommandx{\RName}[4][1=, 2=, 3=, 4=]{\mthname[#4]{R#3}[#1][#2]}
			\newcommandx{\SName}[4][1=, 2=, 3=, 4=]{\mthname[#4]{S#3}[#1][#2]}
			\newcommandx{\TName}[4][1=, 2=, 3=, 4=]{\mthname[#4]{T#3}[#1][#2]}
			\newcommandx{\UName}[4][1=, 2=, 3=, 4=]{\mthname[#4]{U#3}[#1][#2]}
			\newcommandx{\VName}[4][1=, 2=, 3=, 4=]{\mthname[#4]{V#3}[#1][#2]}
			\newcommandx{\WName}[4][1=, 2=, 3=, 4=]{\mthname[#4]{W#3}[#1][#2]}
			\newcommandx{\XName}[4][1=, 2=, 3=, 4=]{\mthname[#4]{X#3}[#1][#2]}
			\newcommandx{\YName}[4][1=, 2=, 3=, 4=]{\mthname[#4]{Y#3}[#1][#2]}
			\newcommandx{\ZName}[4][1=, 2=, 3=, 4=]{\mthname[#4]{Z#3}[#1][#2]}
\newcommand{\set}[2]
{\ensuremath{\argint{\{}{\argext{#1}{\allowbreak:\allowbreak}{#2}}{\}}}}
\newcommand{\pow}[1]
{\ensuremath{2^{#1}}}
\newcommand{\card}[1]
{\mthempty{\argint{\vert}{#1}{\vert}}}
\newcommand{\SetN}
{\mthset[2]{N}}
\newcommand{\numcc}[2]
{\{#1, \ldots, #2 \}}
\newcommand{\pthelm}{\pi}
\newcommandx{\pthElm}[3][1=, 2=, 3=]
{\mthelm{\pthelm#3}[#1][#2]}
\newcommand{\apset}{\Phi}
\newcommandx{\APSet}[3][1=, 2=, 3=]
{\mthset{\apset#3}[#1][#2]}
\newcommand{\agnset}{N}
\newcommandx{\AgnSet}[3][1=, 2=, 3=]
{\mthset{\agnset#3}[#1][#2]}
\newcommand{\strset}{Str}
\newcommandx{\StrSet}[3][1=, 2=, 3=]
{\mthset{\strset#3}[#1][#2]}
\newcommand{\strelm}{\sigma}
\newcommandx{\strElm}[4][1=, 2=, 3=, 4=]
{\mthargfun{\strelm#4}[#1][#2]{#3}}
\newcommandx{\LTL}[5][1=, 2=, 3=, 4=, 5=]
{\txtargname{LTL#5{\small\argint{$[$}{#1}{$]$}}}[#2][#3]{#4}\xspace}
\newcommandx{\LTLF}[5][1=, 2=, 3=, 4=, 5=]
{\txtargname{LTL#5{\small\argint{$[$}{#1}{$]$}}}[\textsc{f}][#3]{#4}\xspace}
\newcommandx{\PTL}[5][1=, 2=, 3=, 4=, 5=]
{\txtargname{PTL#5{\small\argint{$[$}{#1}{$]$}}}[#2][#3]{#4}\xspace}
\newcommandx{\LDL}[5][1=, 2=, 3=, 4=, 5=]
{\txtargname{LDL#5{\small\argint{$[$}{#1}{$]$}}}[#2][#3]{#4}\xspace}
\newcommandx{\LDLF}[5][1=, 2=, 3=, 4=, 5=]
{\txtargname{LDL#5{\small\argint{$[$}{#1}{$]$}}}[\textsc{f}][#3]{#4}\xspace}
\newcommandx{\QPLDLF}[5][1=, 2=, 3=, 4=, 5=]
{\txtargname{QP\LDLF#5{\small\argint{$[$}{#1}{$]$}}}[#2][#3]{#4}\xspace}
\newcommandx{\ATLS}[5][1=, 2=, 3=, 4=, 5=]
{\txtargname{Atl$^{\star}$#5{\small\argint{$[$}{#1}{$]$}}}[#2][#3]{#4}\xspace}
\newcommandx{\SL}[5][1=, 2=, 3=, 4=, 5=]
{\txtargname{Sl#5{\small\argint{$[$}{#1}{$]$}}}[#2][#3]{#4}\xspace}
\newcommand{\F}
{\mthsym{F}}
\newcommand{\QPTL}
{\txtname{Q}\PTL}
\newcommand{\Exs}[1]
{\ensuremath{\argint{\langle}{#1}{\rangle}}}
\newcommand{\All}[1]
{\ensuremath{\argint{[}{#1}{]}}}
\newcommandx{\iBG}[5][1=, 2=, 3=, 4=, 5=]
{i\txtargname{BG#5{\small\argint{$[$}{#1}{$]$}}}[#2][#3]{#4}\xspace}
\newcommandx{\iBGF}[5][1=, 2=, 3=, 4=, 5=]
{i\txtargname{BG#5{\small\argint{$[$}{#1}{$]$}}}[\textsc{f}][#3]{#4}\xspace}
\newcommandx{\RMG}[5][1=, 2=, 3=, 4=, 5=]
{\txtargname{RMG#5{\small\argint{$[$}{#1}{$]$}}}[#2][#3]{#4}\xspace}
\newcommandx{\RMGF}[5][1=, 2=, 3=, 4=, 5=]
{\txtargname{RMG#5{\small\argint{$[$}{#1}{$]$}}}[\textsc{f}][#3]{#4}\xspace}
\newcommand{\pto}{\rightharpoonup}
\newcommand{\sink}{\mthfun{sink}}
\newcommand{\strpElm}{\mthelm{\vec{\sigma}}}
\newcommand{\NE}{\mthset{NE}}
\newcommand{\sNE}{\mthset{sNE}}
\newcommand{\winsym}{\mthset{Win}}
\newcommandx{\Win}[3][1=, 2=, 3=]
{\mthset{\winsym#3}[#1][#2]}
\newcommand{\presym}{\mthfun{Pre}}
\newcommandx{\Pre}[3][1=, 2=, 3=]
{\mthset{\presym#3}[#1][#2]}
\newcommand{\eqsym}{\mthfun{Eq}}
\newcommandx{\Eq}[3][1=, 2=, 3=]
{\mthset{\eqsym#3}[#1][#2]}
\newcommandx{\AFW}[5][1=, 2=, 3=, 4=, 5=]
{\txtargname{AFW#5{\small\argint{$[$}{#1}{$]$}}}[#2][#3]{#4}\xspace}
\newcommand{\file}{\mthfun{file}}
\begin{document}

\begin{frontmatter}

\title{Multi-Player Games with LDL Goals over Finite Traces}

%% Group authors per affiliation:
\author{Julian Gutierrez$^1$}
\author{Giuseppe Perelli$^{2}$\footnote[2]{This work has been done while being affiliated to the University of Oxford.}}
\author{Michael Wooldridge$^1$}
\address{$^1$Department of Computer Science, University of Oxford \\ $^2$Department of Computer Science, University of G\"{o}teborg}

\begin{abstract}
	{\em Linear Dynamic Logic} on finite traces (\LDLF) is a powerful logic for reasoning about the behaviour of concurrent and multi-agent systems. 
	In this paper, we investigate techniques for both the {\em characterisation} and {\em verification} of equilibria in multi-player games with goals/objectives expressed using logics based on \LDLF. This study builds upon a generalisation of {\em Boolean games}, a logic-based game model of multi-agent systems where players have goals succinctly represented in a logical way. 
	Because \LDLF goals are considered, in the settings we study---Reactive Modules games and iterated Boolean games with goals over finite traces---players' goals can be defined to be regular properties while achieved in a finite, but arbitrarily large, trace. 
	In particular, using {\em alternating automata}, the paper investigates automata-theoretic approaches to the characterisation and verification of (pure strategy Nash) equilibria, shows that the set of Nash equilibria in multi-player games with \LDLF objectives is regular, and provides complexity results for the associated automata constructions.
\end{abstract}

\begin{keyword}
	Games, Temporal Logic, Multi-Agent Systems, Formal Verification. 
\end{keyword}

\end{frontmatter}

%\linenumbers
%	
\section{Introduction}
	\label{sec:int}
	
	{\em Boolean games} (BG~\cite{HHMW01}) are a logic-based model of multi-agent systems where each agent/player $i$ is associated with a goal, represented as a {\em propositional logic} (PL) formula $\gamma_i$, and player $i$'s main purpose is to ensure that $\gamma_i$ is satisfied.
	The strategies and choices for each player $i$ are defined with respect to a set of Boolean variables $\Phi_i$, drawn from an overall set of variables $\Phi$.
	Player $i$ is assumed to have unique control over the variables in $\Phi_i$, in that it can assign truth values to these variables in any way it chooses.
	Strategic concerns arise in Boolean games as the satisfaction of player $i$'s goal $\gamma_i$ can depend on the variables controlled by other players.
	
	{\em Reactive Modules games} (RMG~\cite{GHW17}) and {\em iterated Boolean games} (iBG~\cite{GHW15b}) generalise Boolean games by making players interact with each other for infinitely many rounds.
	As in the standard (one-shot or one-round) setting described above, there are $n$ players each of whom uniquely controls a subset of Boolean variables and defines the achievement of a particular goal formula $\gamma_i$ satisfied.
	However, in RMGs and iBGs, players' goals $\gamma_i$ are {\em Linear Temporal Logic} formulae (\LTL~\cite{Pnu77}), rather than PL formulae, which are naturally interpreted over infinite sequences of valuations of the variables in $\Phi$; thus, in both RMGs and iBGs, such infinite sequences of valuations represent the plays of these games.%
	\footnote{Although similar, iBGs and RMGs have a number of differences that will be discussed later in the paper.}
	
	Even though RMGs, iBGs, and conventional Boolean games are logic-based models of multi-agent systems, they capture players' goals---and therefore the desired behaviour of the underlying multi-agent systems---in radically different ways: whereas Boolean games have PL goals (which are naturally evaluated over one-round games), RMGs and iBGs have \LTL goals (which are naturally evaluated over games with infinitely many rounds), encompassing two extremes of the landscape when considering repeated games.
	However, there are games, systems, or situations where goals evaluated after an unbounded, but certainly finite, number of rounds should, or must, be considered~\cite{MS03,SRM18}.
	
	In this paper we fill this gap and define and investigate {\em multi-player games with goals over finite traces}, which are games where players' goals can be satisfied/achieved after a finite, but arbitrarily large, number of rounds.
	More specifically, the goals in these games are given by {\em Linear Dynamic Logic} formulae (\LDLF) which are evaluated over finite sequences of valuations of the variables in $\Phi$, that is, over {\em finite traces} of valuations, instead of PL formulae (as in BGs) or \LTL formulae (as in RMGs and iBGs). 
	Thus, while in game with goals over finite traces a play still is an infinite trace of valuations, the satisfaction of a player's goal may occur after an unbounded but finite number of rounds.
	This sharply contrasts with the case of goals given by \LTL formulae ({\em e.g.}, as in iBGs and RMGs), where it may be that a player's objective is satisfied only after considering the full infinite trace of valuations. 
	This simple feature has significant implications, since rather complex automata constructions for the analysis of logics and games over infinite traces may become conceptually simpler under this new semantic (logic-based) framework.
	More importantly, this key observation allows one to define an automata model
	that exactly characterises the set of Nash equilibria in games with goals
	given by regular objectives.
	\\ \indent
	There are several reasons to consider \LDLF goals.
	\LDLF offers great expressive power to our logic-based framework, which is indeed equivalent to monadic second-order logic (MSO).
	On the other hand, \LTL interpreted on finite traces (\LTLF) is as expressive as first-order logic (FOL) over finite traces~\cite{DV13}.
	This, in turn, implies that, over finite traces, while with \LTLF we can only
	describe star-free regular languages/properties, with \LDLF we can describe
	all regular languages/properties---that is, the properties and languages that
	can be described by regular expressions or finite state automata.
	Nevertheless, the automata-theoretic approach and complexity results for solving their related decision problems are equivalent, showing that the gain in expressiveness is achieved for free.
	%	\cbend
	%
	In this paper, we first define multi-player games with \LDLF goals and then investigate their main game-theoretic properties using a new automata-theoretic approach to reasoning about Nash equilibria.
	Our technique to reason about equilibria builds on automata constructions originally defined to reason about \LDLF formulae~\cite{DV13,DV15}.
	Using this automata-theoretic technique we show a number of subsequent verification and characterisation results, as follows.
	\\ \indent
	Firstly, we show that \emph{checking} whether some strategy profile is a Nash
	equilibrium of a game is a PSPACE-complete problem, thus no harder than
	\LDLF satisfiability~\cite{DV13}.
	Secondly, we focus on the \textsc{NE-Nonemptiness} problem---which  
	asks for the existence of a Nash equilibrium in a multi-player game 
	succinctly specified by a set of Boolean variables and \LDLF formulae---and 
	show that deciding if a multi-player game with \LDLF goals (whether RMG or iBG) has a Nash equilibrium
	can be solved in 2EXPTIME, thus no harder than solving \LDLF
	synthesis~\cite{DV15}.
	The automata technique we use for this problem also shows that the set of Nash
	equilibria in these games is $\omega$-regular and can therefore be {\em
		characterised} using alternating automata.
	Thirdly, we also provide complexity results for the main decision problems
	related to the equilibrium analysis of these games with respect to extensions and restrictions of the initially studied framework.
	In particular, we show that a small extension of the goal language, which
	we call \emph{Quantified-Prefix Linear Dynamic Logic} (\QPLDLF),
	has the same automata-theoretic characteristics as \LDLF, and so it can be
	studied using the same techniques.
	Moreover, \LDLF synthesis can be expressed in \QPLDLF, 
	ensuring 2EXPTIME-completeness. 
	\\ \indent
	Regarding restrictions on the general framework, we first focus on the problem
	of reasoning with memoryless strategies.
	We show, using an automata construction, that the set of Nash equilibria for
	this games is also $\omega$-regular.
	However, an alternative procedure for this problem, not based on automata, 
	shows that improved complexity can be obtained when compared with the 
	standard automata techniques to reason about \LDLF. 
	Another restriction on strategies considered in the paper is the one of myopic
	strategies (which can be used to define all beneficial deviations in a game), 
	in which players perform actions that are independent of the current state of
	the game execution.
	We show that games with such a restriction can be solved in EXPSPACE. 
	We also consider the much more stable solution concept of
	\emph{strong Nash equilibrium}, 
	where sets of players in the game are allowed to jointly deviate, 
	and provide an adaptation of the
	automata-based approach that retains the language characterisation
	and complexity properties of Nash equilibrium.
	\\ \indent
	A key contribution of this work is that our automata-theoretic approach
	features two novel properties, within the same reasoning framework.
	Firstly, it shows that {\em checking} the existence of Nash equilibria can be 
	reduced to a number of \LDLF synthesis and satisfiability
	problems---generalising ideas initially used to reason about \LTL
	objectives~\cite{GHW15a}.
	Secondly, our automata constructions provide reductions where not only
	non-emptiness but also language equivalence is preserved.
	This additionally shows that the set of Nash equilibria in infinite games with 
	regular goals is an $\omega$-regular set, to the best of our knowledge, 
	a semantic {\em characterisation} not previously known, and which do not
	immediately follows from other representations of Nash equilibria---see, {\em
		e.g.}, \cite{ChatterjeeHP10,FismanKL10,KPV14,MMPV14,KPV15}.
	
	\subsection*{Motivation and Previous Work}
	While studying either multi-player games or \LDLF is interesting in itself, 
	from an AI perspective, our main motivation comes from applications to 
	multi-agent systems. In particular, it has been shown that in many scenarios, 
	for instance in the context of 
	planning AI systems~\cite{DV13,DV15}, while logics like LTL, 
	or even LTL over finite traces (LTL$_F$), 
	can be used to reason about the behaviour of agents in such AI systems,  these 
	logics are not powerful enough to express in a satisfactory way the main features of 
	agents in such a context.
	In order to illustrate the use of \LDLF, and motivate even further our work, 
	we will present an example in the next section, where some of the goals 
	either are not expressible in LTL%
	\footnote{This fact can be proved from the result that \LDLF is equivalent to Monadic-Second Order logic, while LTL is equivalent to First-Order Logic~\cite{DV13}} or have a more intuitive specification in \LDLF than in LTL.
	Together with applications to planning AI systems (see~\cite{DV13,DV15}), 
	this is an example of another 
	instance where one can see an advantage of using a game with \LDLF goals 
	over a game with LTL goals, instead. 
	
	Moreover, regarding previous work, while our model builds on RMGs~\cite{GHW17} and iBGs~\cite{GHW15b}, where goals are given by LTL formulae, there are at least two main differences with such work. Firstly, we study scenarios that consider memoryless and myopic strategies, for which results on iBGs have not been investigated%
	\footnote{Memoryless and myopic strategies were already studied for RMGs in~\cite{GPW16}.}.
	Secondly, and most importantly, the tools developed in this paper to obtain most of our complexity and characterisation results, are {\em technically} remarkably different from those used for RMGs and iBGs, specifically, with respect to the techniques used in~\cite{GutierrezHW13,GHW15a,GHW17}. 
	To be more precise, for RMGs and iBGs the main question is reduced to rational synthesis~\cite{FismanKL10}, whose solution goes via a parity automaton characterising formulae of an extension of Chatterjee {\em et al}'s Strategy Logic~\cite{ChatterjeeHP10}, which leads to an automata construction that can be further optimised if computing Nash equilibria is the only concern.
	Instead, in our case, we reduce the problem {\em directly} to a question of  automata constructed in a different way.
	As a consequence, we provide a new set of automata constructions which do not rely on nor relate to those used in rational synthesis, {\em i.e.}, those used to solve RMGs and iBGs.
	Our automata constructions are also different from those used by De Giacomo and Vardi in~\cite{DV13,DV15,DV16}, as described next. 
	
	In~\cite{DV13,DV15,DV16}, De Giacomo and Vardi study the satisfiability and synthesis problems for \LDLF, with and without imperfect information.
	Because of the (game-theoretic) nature of these two problems, their automata constructions deal with two-player zero-sum turn-based scenarios only. 
	Instead, in our case, we deal with multi-player general-sum concurrent scenarios. 
	This difference leads to a completely different technical treatment/manipulation of the automata that can be 
	initially constructed from \LDLF formulae. In fact, their automata constructions and ours are the same 
	only up to the point where \LDLF formulae are translated into automata---that is, the very first step 
	in a long chain of constructions.  
	Moreover, since De Giacomo and Vardi study synthesis and satisfiability problems (represented by two-player games), 
	whereas we study Nash equilibria (in the context of multi-player games), 
	we are required to have a different technical treatment of the automata involved 
	in the solution of the problems investigated in this paper.
	
\section{Formal Framework}
	\label{sec:iBGf}
	
	\subsection{Linear Dynamic Logic on Finite Traces}
		
		In this paper, we consider Linear Dynamic Logic on Finite Traces (\LDLF), a temporal logic introduced in~\cite{DV13} in order to reason about systems whose behaviour can be characterised by sets of finite traces, that is, finite sequences of valuation for the variables of the system.
		
		\begin{definition}[Syntax]
			The syntax of \LDLF is as follows: 
			
			\begin{center}
				
				$\varphi := p \mid \neg \varphi \mid \varphi \wedge \varphi \mid
				\varphi \vee \varphi \mid \Exs{\rho} \varphi \mid \All{\rho}
				\varphi$
				
				$\rho := \psi \mid \varphi? \mid \rho + \rho \mid \rho ;
				\rho \mid \rho^*$,
			\end{center}
			where $p$ is an atomic proposition in $\APSet$; $\psi$ denotes a propositional formula over the atomic propositions in $\APSet$.
		\end{definition}
		
		The symbol $\rho$ denotes path expressions, which are regular expressions over propositional formulae $\psi$, with the addition of the test construct $\varphi?$ from propositional dynamic logic (PDL); and $\varphi$ stands for \LDLF formulae built by applying Boolean connectives and the modal connectives. 
		%$\Exs{\rho} \varphi$. 
		Tests are used to insert checks for satisfaction of additional \LDLF formulae.
		
		The classic \LTLF operators can be defined as follows: $\ltlnext \varphi \equiv \Exs{\top} \varphi$; $\sometime \varphi \equiv \Exs{\top^*} \varphi$; $\always \varphi \equiv \All{\top^*} \varphi$; $\varphi_{1} \until \varphi_{2} \equiv \Exs{(\varphi_{1}?)^*} \varphi_{2}$.
		
		\LDLF formulae are interpreted over finite traces of the form $\pthElm: \numcc{0}{t} \to \pow{\APSet}$ and an integer $i \in \numcc{0}{t}$.
		
		\begin{definition}[Semantics]
			The semantics of \LDLF formulae is as follows:
			\begin{itemize}
	%			\item atomic propositions and Boolean connectives as usual;
				
			\item
				$\pthElm, i \models \neg \varphi$ if $\pthElm, i \not\models \varphi$;
				
			\item
				$\pthElm, i \models \varphi_{1} \wedge \varphi_{2}$ if $\pthElm, i
				\models \varphi_{1}$ and $\pthElm, i \models \varphi_{2}$;
				
			\item
				$\pthElm, i \models \varphi_{1} \vee \varphi_{2}$ if $\pthElm, i
				\models \varphi_{1}$ or $\pthElm, i \models \varphi_{2}$;

			\item
			$\pthElm, i \models \Exs{\alpha} \varphi$ if there exists $j \in
			\numcc{i}{t}$ such that $(i, j) \in \RName(\alpha, \pthElm)$ and
			$\pthElm, j \models \varphi$;
			
			\item
			$\pthElm, i \models \All{\alpha} \varphi$ if for all $j \in
			\numcc{i}{t}$, if $(i, j) \in \RName(\alpha, \pthElm)$ then
			$\pthElm, j \models \varphi$;
			
		\end{itemize}
		
		where $\RName(\alpha, \pthElm) \subseteq \SetN \times \SetN$ is
		recursively defined by 
		
		\begin{itemize}
			\item
			$\RName(\psi, \pthElm) = \set{(i, i + 1)}{\pthElm(i) \models \psi}$;
			
			\item
			$\RName(\varphi?, \pthElm) = \set{(i, i)}{\pthElm, i \models
				\varphi}$;
			
			\item
			$\RName(\alpha_{1} + \alpha_{2}, \pthElm) = \RName(\alpha_{1},
			\pthElm) \cup \RName(\alpha_{2}, \pthElm)$;
			
			\item
			$\RName(\alpha_{1} ; \alpha_{2}, \pthElm) = \set{(i, j)}{\exists k
				\in \numcc{i}{j}. (i, k) \in \RName(\alpha_{1}, \pthElm) \wedge (k,
				j) \in \RName(\alpha_{2}, \pthElm)}$;
			
			\item
			$\RName(\alpha^*, \pthElm) = \{(i, i)\} \cup \set{(i, j)}{
				\exists k \in \numcc{i}{j}. (i, k) \in \RName(\alpha, \pthElm)
				\wedge (k, j) \in \RName(\alpha^*, \pthElm)}$.
		\end{itemize}
	
	By $\pi \models \varphi$, we denote the fact that $\pi, 0 \models \varphi$.

	\end{definition}

	\subsection{Iterated Boolean Games}
	
	We now introduce \emph{iterated Boolean games with goals over finite traces} (\iBGF), 
	which build upon the framework of iBGs~\cite{GHW15b}. 
	In an \iBGF, players' goals are given by \LDLF formulae interpreted on infinite paths of valuations over a given set of Boolean variables.

	An \iBGF is a tuple $\GName = \tuple{\AgnSet, \APSet, \APSet_{1}, \ldots, \APSet_{n}, \gamma_{1}, \ldots, \gamma_{n}}$, where $\AgnSet = \numcc{1}{n}$ is a set of players, $\APSet$ is a set of Boolean variables, partitioned into $n$ sets $\APSet_{1}, \ldots, \APSet_{n}$, and the goals $\gamma_1,\ldots,\gamma_n$ of the game are \LDLF formulae over $\APSet$.
	In an \iBGF each player $i$  is assumed to control a set of
	propositional variables $\APSet_{i}$, in the sense that player $i$ has the
	power to set the values (true ``$\top$'' or false ``$\bot$'') of each of the
	variables in $\APSet_{i}$.
	An \emph{action} for player $i$ is a possible valuation $v_{i} \in
	\pow{\APSet_{i}}$.
	An \emph{action vector} $\vec{v} = \tuple{v_{1}, \ldots v_{n}}$
	is a collection of actions, one for each player in the game. 
	Every action vector determines an overall valuation for the variables in
	$\APSet = \bigcup_{i = 1}^{n} \APSet_{i}$ of the game.
	An \iBGF is played for an infinite number of rounds, as an iBG, but the goals of the game, which are \LDLF formulae, are interpreted on finite traces of such an infinite run. 
	As a consequence, the satisfaction of a player's goal in the game must occur after a finite, yet arbitrarily large, number of rounds. 
	Due to this, we need to define how an \LDLF formula is satisfied on an \iBGF.
	The most natural way to do so, also implicitly followed in~\cite{DV15}, is
	to say that an infinite play $\pi^{\infty}$%
	\footnote{In this paper, we denote the infinite plays by $\pi^{\infty}$ in order to distinguish them from the finite plays, simply denoted by $\pi$.} satisfies an \LDLF formula
	$\varphi$ if and only if there exists $k \in \SetN$ such that the prefix up to $k$ of $\pi^{\infty}$, denoted by $\pi^{\infty}_{< k}$, satisfies $\varphi$, i.e., $\pi^{\infty}_{< k} \models \varphi$.
	We also write $\pi^{\infty} \models \varphi$ if there is $k \in \SetN$ such
	that $\pi^{\infty}_{< k} \models \varphi$.
	
	Observe that this definition allows a formula and its negation to be
	satisfied on the same infinite play.
	Consider, for example, the \LDLF formula $\varphi = \Exs{\top^{*}} p$, 
	which is satisfied by all and only finite traces ending with a state labelled
	by $p$, and the infinite play $\pi^{\infty} = (\bar p p)^{\omega}$
	which toggles the value of $p$ infinitely often.
	Clearly, $\varphi$ is satisfied on every prefix of $\pi^{\infty}$ of even length,
	while $\neg \varphi$ is satisfied on every prefix of $\pi^{\infty}$ of odd
	length.
	Thus, we obtain that $\pi^{\infty} \models \varphi$ and $\pi^{\infty} \models \neg
	\varphi$.
	This means that the notion of satisfaction given by $\pi^{\infty} \models \neg \varphi$ cannot be used in place of $\pi^{\infty} \not\models \varphi$, as they are not equivalent.
	We discuss this later in the paper, and show that a small extension of
	\LDLF, which allows one to quantify over finite plays in a game, can be used
	to write formulae $\psi$ that equals to the non-satisfaction of $\varphi$,
	i.e., where $\pi^{\infty} \models \psi$ if and only if $\pi^{\infty} \not\models \varphi$.
	
	Observe that the set of models of an \LDLF formula $\varphi$ is of the form $\alpha \cdot (\pow{\APSet})^{\omega}$, with $\alpha$ representing the set of finite traces satisfying $\varphi$.
	In~\cite{DV13} it has been proven that $\alpha$ can be represented by a regular expression.
	This implies, as shown later, that the set of infinite plays satisfying an \LDLF formula can be described in terms of an nondeterministic B{\"u}chi word automaton (NBW) built upon the nondeterministic finite word automaton (NFW) accepting $\alpha$, in which accepting states are constructed so that they are sinks with a self-loop. 
	This makes the expressive power of \LDLF be incomparable with that of \LTL
	when considering infinite words---i.e., infinite plays.
	Indeed, on the one hand, it is known that \LTL cannot express the
	$\omega$-regular expression $(p \cdot \pow{\APSet})^{*} \cdot
	(\pow{\APSet})^{\omega}$~\cite{KS05}, and, on the other hand, \LDLF cannot
	express the \LTL formula $\always \sometime p$ (``always eventually'' $p$), 
	for which every NBW accepting the
	set of models cannot be of the form described above.
	Now, in order to illustrate the concepts introduced so far, 
	and further motivate the \iBGF framework, 
	we present an example where the need for \LDLF goals plays an essential role. 
	\begin{example}
		\label{exm:filesh}
		Consider a file-sharing network composed by a protocol manager and 2 clients who want to share $\file_{1}$ of size $n_1$ packets and $\file_{2}$ of size $n_2$ packets, respectively.
		The clients want to eventually download the other client's file, while the protocol manager wants this transfer of information to happen in a fair way.
		For instance, the manager wants client~1 to always upload in odd time-steps of the communication, while client~2 to always upload in the even time-steps of the communication protocol.
		Moreover, the download of a given file can be marked as completed only after the whole number of its packets has been uploaded by the other party. 
		
		We can represent this protocol by means of a three agent game with $\AgnSet = \{0, 1, 2\}$ in which $\Phi_{0} = \{d_{1}, d_{2}\}$, $\Phi_{1} = \{u_{1} \}$, and $\Phi_{2} = \{u_{2} \}$.
		Variable $u_{i}$ being true means that a single packet of $\file_{i}$ has been uploaded by agent $i$, while variable $d_{i}$ being true means that the download of $\file_{i}$ has been completed.
		Regarding the goals of the agents, we have the following (\LDLF) formulae.
		The two clients $1$ and $2$ want to eventually download $\file_{2}$ and $\file_{1}$, respectively.
		Thus, we have $\gamma_{1} = \Exs{\top^{*}} d_{2}$ and $\gamma_{2} = \Exs{\top^{*}} d_{1}$.
		Regarding the goal $\gamma_{0}$ of the protocol manager, this has to include several requirements.
		First of all, it requires that client~1 always uploads in odd time-steps until the download of $\file_{1}$ has been completed, while client~2 does the same on even time-steps.
		We can represent these properties with the following \LDLF formulae: 
		$\gamma_{upl_1} = \Exs{(u_{1};\top)^{*}} d_{1}$ and $\gamma_{upl_2} = \Exs{(\top;u_{2})^{*}} d_{2}$.
		Note that client~1 has no requirement on the even time-steps, as neither client~2 on odd time-steps.
		The reader might notice that the properties $\gamma_{upl_1}$ and $\gamma_{upl_2}$ can be represented neither in \LTL nor in \LTLF, which is the finite trace version of \LTL~\cite{KS05}.
		In addition to this, the protocol manager is in charge of marking the files as completely downloaded at the right time of the execution.
		This means that variable $d_{i}$ has to be set to true once the whole amount of packets of $\file_{i}$ has been uploaded and not before.
		We can specify this requirement with the following \LDLF formulae: $\gamma_{com_1} = \All{((\neg u_{1})^{*} ; u_{1})^{n_1}} d_{1}$ and $\gamma_{com_2} = \All{((\neg u_{2})^{*} ; u_{2})^{n_2}} d_{2}$.
		In order to avoid that the protocol manager wrongly marks the download of a file as completed, 
		we also have the following requirements: $\gamma_{incom_1} = \bigwedge_{n < n_1} \neg \Exs{((\neg u_{1})^{*}; u_{1};(\neg u_{1})^{*})^{n}} d_{1}$ and $\gamma_{incom_2} = \bigwedge_{n < n_2} \neg \Exs{((\neg u_{2})^{*}; u_{2};(\neg u_{2})^{*})^{n}} d_{2}$.
		The goal of the protocol manager is therefore given by the conjunction of all these conditions: $\gamma_{0} = \gamma_{upl_1} \wedge \gamma_{upl_2} \wedge \gamma_{com_1} \wedge \gamma_{com_2} \wedge \gamma_{incom_1} \wedge \gamma_{incom_2}$.
		To see that the system we have just described/designed has a stable behaviour, 
		from a game-theoretic point of view, we need the concepts of strategies and 
		Nash equilibria, which are introduced next. 
		Then, we will review this example again later on. 
	\end{example}
	%		\cbend
	
	\subsection{Simple Reactive Modules Language Games}
	\emph{Simple Reactive Modules}~\cite{HLW06} is a model specification language that is based on Reactive Modules~\cite{AH99b} and has been used to describe multi-player games with \LTL goals~\cite{GPW16,GHW17}. 
	Reactive Modules games (RMG) are an extension of iBGs in which one can specify constraints on the power that a player has over the variables that such a player controls%
	\footnote{Iterated Boolean Games result in the special case of a RMG in which no constraint is specified for every agent.}.
	In addition, one can specify multi-player games directly in a high-level description language (which one can then use as the input of a verification tool -- Reactive Modules are used, {\em e.g.}, in MOCHA~\cite{AHMQRT98} and PRISM~\cite{kwiatkowska:2009a}), which is more convenient from a user point of view for modelling purposes. 
	
	In an RMG, an agent is mapped to a reactive module, a machinery that dynamically specifies the choices that are available to the associated agent.
	Formally, a reactive module consists of:
	
	\begin{itemize}
		\item[{\em (i)}]
		an \emph{interface}, which defines the module's name and the set of
		Boolean variables under the \emph{control} of the module; and
		
		\item[{\em (ii)}]
		a number of \emph{guarded commands}, which define the choices available 
		to the module at every state.
	\end{itemize}

	Guarded commands are of two kinds: those used for \emph{initialising} the variables under the module's control (\textbf{init} guarded commands), and those for \emph{updating} these variables subsequently 
	(\textbf{update} guarded commands).
	A guarded command has two parts: a condition part (the ``guard'') and 
	an action part, which defines how to update the value of (some of) the 
	variables under the control of a module.
	The intuitive reading of a guarded command $\phi \ASSIGN \mthfun{a}$ is ``if the
	condition $\phi$ is satisfied, then \emph{one of the choices available to
	the module is to execute the action $\mthfun{a}$}''. 
	We note that the truth of the guard $\phi$ does not mean that $\mthfun{a}$
	\emph{will} be executed: only that such a command is \emph{enabled} for 
	execution---it 
	\emph{may be chosen}. 

	Formally, a guarded command $g$ over the set of Boolean variables $\Phi$ is an expression
	\[
	\phi \ASSIGN x_1 := \psi_1; \cdots; x_k := \psi_k
	\]
	where $\phi$ (the guard) is a propositional logic formula over $\Phi$, each $x_i$ is a controlled variable, and each $\psi_i$ is a propositional logic formula over $\Phi$.
	Let $guard(g)$ denote the guard of~$g$.
	Thus, in the above rule, $guard(g) = \phi$.
	We require that no variable appears on the left hand side of two assignment
	statements in the same guarded command.  
	We say that $x_1, \ldots, x_k$ are the \emph{controlled variables} of $g$,
	and denote this set by $ctr(g)$. 
	If no guarded command of a module is enabled, the values of all variables 
	in~$ctr(g)$ are left unchanged; in \acro{srml} notation, if needed, 
	$\SKIP$ will refer to this particular case.
	
	Formally, an \acro{srml} module, $m_i$, is %then %formally 
	defined as a triple:
	\[
	m_i = \tuple{\Phi_{i}, I_{i}, U_{i}},   
	\]
	where:
	\begin{itemize}
	\item 
	$\Phi_{i} \subseteq \Phi$ is the (finite) set of variables controlled by
	$m_i$;
	\item 
	$I_i$ is a (finite) set of \emph{initialisation} guarded commands, such that
	for all $g \in I_i$, we have $ctr(g) \subseteq \Phi_i$; and 
	\item 
	$U_i$ is a (finite) set of \emph{update} guarded commands, such that for all
	$g \in U_i$, we have $ctr(g) \subseteq \Phi_i$.
	\end{itemize}
	
	Modules can be composed in an intersection manner as follows.
	For two modules $m_{1} = (\Phi_{1}, I_{1}, U_{1})$ and $m_{2} = (\Phi_{2}, I_{2}, U_{2})$ with $\Phi_{1} \cap \Phi_{2} = \emptyset$, the product module is $m_{1} \otimes m_{2} = (\Phi_{1} \cup \Phi_{2}, I_{1} \otimes I_{2}, U_{1} \otimes U_{2})$ where the $\otimes$-operator over sets of guards $G_{1}$ and $G_{2}$ is defined as the set of guards $g$ such that there exist $g_{1} \in G_{1}$ and $g_{2} \in G_{2}$ of the form $\phi_{1} \ASSIGN x_1^{1} := \psi_1^{1}; \cdots; x_k^{1} := \psi_{k_{1}}^{1}$ and $\phi_{2} \ASSIGN x_1^{2} := \psi_1^{2}; \cdots; x_k^{2} := \psi_{k_{1}}^{2}$, respectively, such that $g$ is of the form:
	
	\[
	\phi_{1} \wedge \phi_{2} \ASSIGN x_1^{1} := \psi_1^{1}; \cdots; x_k^{1} := \psi_{k_{1}}^{1}; x_1^{2} := \psi_1^{2}; \cdots; x_k^{2} := \psi_{k_{1}}^{2}
	\]
	
	An \acro{srml} \emph{game} over \LDLF goals (\RMGF) is then defined to be a tuple: 
	
	\[\GName= \tuple{N, \Phi, M, \eta, \gamma_{1}, \ldots, \gamma_{n}} \]
	where $N = \{1, \ldots, n \}$ is a set of agents, $\Phi$ is a set of Boolean variables, $M = \{m_1, \ldots m_M\}$ is a set of modules such that $\Phi_{m_{1}}, \ldots \Phi_{m_{M}}$ forms a partition of $\Phi$ (so every variable in $\Phi$ is controlled by some module, and no variable
	is controlled by more than one module), and $\eta: M \to N$ is a function assigning a module to an agent.
	Finally $\gamma_{i}$ is an \LDLF formula associated to agent $i$.

	\acro{srml} games can be seen as an extension of iBGs in which the strategic power of agents is not apriori fixed but flexibly allocated according to the evolution of the game.
	Typically, modules can be used to enforce or prevent agents to adopt a desired/undesired behaviour.
	As an example, consider the module $\acro{toggle}$ described below:
	
					$$
	\begin{array}{l}
	\MODULE\ \acro{toggle}\ \CONTROLS\ \{p\}\ \\
	
	\quad \INIT \\
	
	\quad \GUARD{}\ \top \ASSIGN\ p := \top; \\
	
	\quad \GUARD{}\ \top \ASSIGN\ p := \top; \\
	
	\quad \UPDATE \\
	
	\quad \GUARD{}\ p \ASSIGN\ p := \bot; \\
	
	\quad \GUARD{}\ \neg p \ASSIGN\ p := \top \\
	
	\end{array}
	$$
	
	~
	
	A player associated to this module is left free to set the value of $p$ on the first round, but then is forced to toggle its value at every round. This kind of constraint on the strategic power of the player associated with agent $\acro{toggle}$ cannot be specified in iBGs. 

	When an agent $i$ is associated to the set of modules $\eta^{-1}(i)$, then, it can control all the variables that are associated to some module in $\eta^{-1}(i)$.
	Moreover, at every step of the execution and for every module, it can use one and only one active guarded command.
	Therefore, every game $\GName$ is equivalent to a game $\GName'$ where every agent $i$ is associated to the module $\bigotimes_{m_i \in \eta^{-1}(i)} m_i$.
	Such translation from $\GName$ to $\GName'$ might produce an exponential blow-up in the representation.
	However, it is not hard to see that all the techniques used in this paper can be easily applied to a \RMGF game with more than a module associated to an agent, thus avoiding this representation expansion and blow-up in the complexity.
	For simplicity, we focus on games of this special form  and denote them by $\GName = \tuple{N, \Phi, m_1, \ldots, m_n, \gamma_{1}, \ldots, \gamma_{n}}$ where $m_i = \tuple{\Phi_i, I_i, U_i}$ is the single module associated to agent $i$, specifying the variable control for the whole set $\Phi_{i}$.

	In this paper, we provide results for both iBGs and RMGs.
	In particular, we show that, for all the problems addressed here, they have the same computational complexity.
	Therefore, given that RMGs are a proper extension of iBGs, we show the upper-bound complexity results for RMGs and lower-bound complexity results for iBGs.
	
	\subsection{Strategies and Nash Equilibria}
	
	Strategies in \iBGF and \RMGF are modelled as {\em deterministic finite state machines}.
	Formally a deterministic finite state machine for player $i$ is a tuple $\sigma_{i} = (S_{i}, s_{i}^{0}, \delta_{i}, \tau_{i})$ where, $S_{i}$ is a finite set of \emph{internal states}, $s_{i}^{0}$ is the \emph{initial state}, $\delta_{i}: S_{i} \times \pow{\APSet} \to S_{i}$ is a \emph{transition function}, and $\tau_{i}: S_{i} \to \pow{\APSet_{i}}$ is the \emph{action function}.
	By $\StrSet_{i}$ we denote the set of possible strategies for player $i$.
	
	In a \RMGF, a strategy $\sigma_{i}$ might not comply with module $m_i$'s specification.
	Hence, for this case we need to define a consistency condition between the module and the strategy.
	We say that $\sigma_{i}$ is \emph{compatible} with module $m_{i}$ if:
	
	\begin{enumerate}
		\item
			$\tau_i(s^0_i) = exec_i(g,\emptyset)$, for some $g \in I_i$; and, 
		\item
			for all $v \in 2^{\Phi}$ and $s, s' \in S_i$, if $s' = \delta_i(s, v)$ then $\tau_i(s') = (\tau_i(s) \setminus ctr(g)) \cup exec_i(g, v)$, for some $g \in U_i$ that is enabled by $v$, {\em i.e.}, such that $v \models guard(g)$, 
	\end{enumerate}
	where $exec_i: (I_i \cup U_i) \times 2^{\Phi} \pto 2^{\Phi_i}$ is the partial function that determines the value of the Boolean variables at the right-hand side of a guarded command when such a guarded command is enabled by a valuation. 
	Formally, $exec_i$ is defined, for a guarded command $g = \phi \ASSIGN x_1'^{i} := \psi_1^{i}; \ \cdots; \ x_k'^{i} := \psi_k^{i}$ and a valuation $v$, as  $exec_i(g,v) = \{x_j^{i} \in \{x_1^{i}, \ldots, x_k^{i} \} \ : \ v \models \psi_j^{i} \}$.
	Intuitively, the compatibility ensures the agent to comply with the module's specification.
	On the one hand, Item 1 states that the guarded command that can be executed on the first state must be an initial command that is enabled on the empty evaluation, that is the evaluation in place before the execution starts.
	On the other hand, Item 2 deals with the fact that, at every iteration, a strategy must execute a guarded command that is enabled in the current state.
	
	From now on, for the case of \RMGF and when it is clear from the context, we will refer to compatible strategies simply as strategies.
	 		
	A (total) \emph{strategy profile} is a tuple $\vec{\sigma} = (\sigma_{1}, \ldots, \sigma_{n})$ of strategies, one for each player.
	We also consider partial strategy profiles.
	For a given set of players $A \subseteq \AgnSet$, we use the notation $\sigma_{A}$ to denote a tuple of strategies, one for each player in $A$.
	Moreover, we use the notation $\sigma_{- A}$ to denote a tuple of strategies, one for each player in $\AgnSet \setminus A$.
	We also use $\sigma_{i}$ in place of $\sigma_{\{ i \}}$ and $\vec{\sigma}_{-i}$ in place of $\vec{\sigma}_{\AgnSet \setminus \{ i \}}$.
	Finally, for two strategy profiles $\vec{\sigma}$ and $\vec{\sigma}'$, by $(\vec{\sigma}_{A}, \vec{\sigma}_{-A}')$ we denote the strategy profile given by associating the strategies in $\vec{\sigma}$ to players in $A$ and strategies in $\vec{\sigma}'$ to players in $\AgnSet \setminus A$.
	
	Since strategies are deterministic, each profile $\vec{\sigma}$ determines a unique infinite play, denoted by $\pi^{\infty}(\vec{\sigma})$, which consists of an infinite sequence of valuations, one for each round of the game.
	Each player $i$ has a preference relation over plays $\pi^{\infty} \in (\pow{\APSet})^{\omega}$, which is determined by its goal $\gamma_{i}$.
	We say that $\pi^{\infty}$ is preferred over ${\pi^{\infty}}'$ by agent $i$, and write $\pi^{\infty} \succeq_{i} {\pi^{\infty}}'$, if and only if ${\pi^{\infty}}' \models \gamma_{i}$ implies that $\pi^{\infty} \models \gamma_{i}$. 
	Using this notion of preference, one can introduce the concept of 
	{\em Nash Equilibrium}.
	We say that $\vec{\sigma}$ is a Nash Equilibrium strategy profile if, for each agent $i$ and
	a strategy $\sigma_{i}' \in \StrSet_{i}$, it holds that $\pi^{\infty}(\vec{\sigma})
	\succeq_{i} \pi^{\infty}(\vec{\sigma}_{-i}, \sigma_{i}')$.
	In addition, by $\NE(\GName) \subseteq \StrSet_{1} \times \ldots \times \StrSet_{n}$ we
	denote the set of Nash Equilibria of the game~$\GName$.
	
	\begin{example}
		\label{exm:filesh-strat}
		Consider again the system in Example~\ref{exm:filesh}.
			A possible strategy $\sigma_{1}$ for player $1$ is a finite-state machine that sets variable $u_{1}$ to true on odd rounds of the execution, while a strategy $\sigma_{2}$ for player $2$ might set $u_{2}$ to true on even rounds of the execution.
		In addition, a possible strategy for player $0$, say $\sigma_{0}$, might be a finite-state machine that sets variable $d_{i}$ to true only after $u_{i}$ has been set to true exactly $n_i$ times in the execution.
		Then, the strategy profile $\vec{\sigma} =(\sigma_{0}, \sigma_{1}, \sigma_{2})$ will be such that the execution $\pi^{\infty} = \pi^{\infty}(\vec{\sigma})$ satisfies $\gamma_{0}$, $\gamma_{1}$, and $\gamma_{2}$, and therefore is a Nash equilibrium.
		Indeed, checking that a strategy profile is a Nash equilibrium of a game is one of the main concerns of this paper, 
		as formalised next.  
	\end{example}
	\paragraph*{Equilibrium Checking}
	We are interested in a number of questions related to the {\em equilibrium analysis} of logic-based multi-player games~\cite{GHW15b,rv-aaai16}.
	
	~
	
	\begin{indent}
		\textbf{\textsc{NE Membership}}.
		Given a game $\GName$ and a strategy profile $\vec{\sigma}$: 
		\begin{center}
			Is it the case
			that $\vec{\sigma} \in \NE(\GName)$?
		\end{center}	
	\end{indent}
	\noindent
	which asks if a strategy profile is a Nash equilibrium of a game.

	The second decision problem we are interested in is the following:

~
	 		
	\begin{indent}
		\textbf{\textsc{NE Non-Emptiness}}. 
		Given a game $\GName$: 
		\begin{center}
			Is it the case that $\NE(\GName) \neq \emptyset$?
		\end{center}		
	\end{indent}
	\noindent 
	which asks if a given game has at least one 
	Nash equilibrium.
	
	Finally, we also consider two decision problems, that are known in the literature as \emph{equilibrium checking}~\cite{rv-aaai16}, formally stated as follows: 
	
~
	
	\begin{indent}
		\textbf{\textsc{E/A-Nash}}.
		Given a game $\GName$ and \LDLF formula $\varphi$:
		\begin{center}
			Does $\pi^{\infty}(\vec{\sigma})
			\models \varphi$ hold, for some/all $\vec{\sigma} \in \NE(\GName)$?
		\end{center}		
	\end{indent}
	\noindent 
	which asks if $\varphi$ is satisfied by some/every
	Nash equilibrium of $\GName$.
	
	In the following sections, we study the above questions for both \iBGF and \RMGF, in particular using an automata-theoretic approach. 

\section{NE Membership}
	\label{sec:membership}
	
	In order to address the \textsc{NE Membership} problem, we first provide some preliminary results on automata.
	An interested reader can find definitions and more details in~\cite{Var95}.
	
	Consider a nondeterministic finite word automaton (NFW) $\AName = \tuple{\Sigma, S, s_{0}, \varrho, F}$, recognizing a regular language $\LName(\AName)$.
	Then consider the nondeterministic B{\"u}chi word automaton (NBW) $\AName[][\infty] = \tuple{\Sigma, S, s_{0}, \varrho', F}$, where, for all $\sigma$ and $s$, we have that 
	$\varrho'(\sigma, s) = \varrho(\sigma, s)$, if $s \notin F$, and 
	$\varrho'(\sigma, s) = \{ s \}$, otherwise.
	
	Intuitively, the automaton $\AName[][\infty]$ mimics the operations of the automaton $\AName$ until an accepting state $s$ is reached.
	From that point on, the automaton disregards mimicking $\AName$ and starts looping indefinitely over $s$.
	Thus, on the one hand, for every finite word $\pi \in \LName(\AName)$, every infinite extension $\pi^{\infty}$, that is, an infinite word such that $\pi$ is a prefix of it, is accepted by $\AName[][\infty]$.
	On the other hand, for an infinite word $\pi^{\infty}$ accepted by $\AName[][\infty]$, there must be a prefix $\pi$ accepted by $\AName$.
	This fact can be shown formally with the following theorem.
	
	\begin{theorem}
		\label{thm:loop}
		Let $\AName$ be a NFW.
		Then, for all $\pi^{\infty} \in \Sigma^{\omega}$, we have that $\pi^{\infty} \in \LName(\AName[][\infty])$ iff there exists $k \in \SetN$ such that $\pi = (\pi^{\infty})_{\leq k} \in \LName(\AName)$.
		In particular, we have that $\LName(\AName[\varphi][\infty]) = \set{\pi^{\infty} \in (\pow{\APSet})^{\omega}}{\pi^{\infty} \models \varphi}$
	\end{theorem}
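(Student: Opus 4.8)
The plan is to prove the stated equivalence by handling the two implications separately and then deriving the ``in particular'' clause as an immediate instantiation. Everything hinges on a single structural feature of the construction: in $\AName[][\infty]$ each accepting state is a sink, so a B\"uchi run meets $F$ infinitely often precisely when it \emph{first enters} $F$ at some finite position and then remains there forever. Accordingly, in both directions I would argue in terms of the \emph{first} index at which an accepting state is reached, because only strictly before that index do the two transition relations $\varrho$ and $\varrho'$ coincide. Throughout I write $\pi^{\infty} = a_0 a_1 \cdots$ for the letters of the infinite word and treat runs as state sequences $q_0, q_1, \ldots$ with $q_{i+1} \in \varrho(a_i, q_i)$ (respectively $\varrho'(a_i, q_i)$).

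For the implication from right to left, suppose $\pi = (\pi^{\infty})_{\leq k}$ is accepted by $\AName$ and fix an accepting run $q_0, q_1, \ldots$ of $\AName$ on $\pi$ whose final state lies in $F$. Let $j$ be the least index with $q_j \in F$. By minimality $q_i \notin F$ for every $i < j$, hence $\varrho'(a_i, q_i) = \varrho(a_i, q_i)$ and the partial run $q_0, \ldots, q_j$ is also legal in $\AName[][\infty]$. Since $q_j \in F$, the only transition available from $q_j$ in $\AName[][\infty]$ is the self-loop, so I can continue by $q_j, q_j, \ldots$ over the remaining letters of $\pi^{\infty}$ to obtain an infinite run that visits the accepting state $q_j$ infinitely often. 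Thus $\pi^{\infty} \in \LName(\AName[][\infty])$.

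For the converse, suppose $\pi^{\infty} \in \LName(\AName[][\infty])$ and fix a B\"uchi-accepting run $q_0, q_1, \ldots$; since it meets $F$ infinitely often it meets $F$ at least once, so again let $j$ be the least index with $q_j \in F$. Minimality gives $q_i \notin F$ and $\varrho'(a_i, q_i) = \varrho(a_i, q_i)$ for $i < j$, so $q_0, \ldots, q_j$ is a legal run of $\AName$ on the prefix $a_0 \cdots a_{j-1} = (\pi^{\infty})_{\leq k}$ (taking $k = j-1$) that ends in $q_j \in F$; hence this prefix is accepted by $\AName$. The only genuine subtlety --- the ``main obstacle,'' such as it is --- is exactly this bookkeeping: reusing the \emph{whole} accepting run of $\AName$ naively can fail, since that run may pass through $F$ early, at which point $\varrho'$ no longer agrees with $\varrho$; restricting attention to the first visit to $F$ dissolves the difficulty symmetrically in both directions.

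Finally, to obtain the ``in particular'' clause I would instantiate $\AName$ as the NFW $\AName[\varphi]$ accepting exactly the finite traces that satisfy $\varphi$, which exists by the regularity result of~\cite{DV13} recalled above, so that $\LName(\AName[\varphi]) = \set{\pi}{\pi \models \varphi}$. The equivalence just established then yields $\pi^{\infty} \in \LName(\AName[\varphi][\infty])$ if and only if $(\pi^{\infty})_{\leq k} \models \varphi$ for some $k \in \SetN$, and this is precisely the definition of $\pi^{\infty} \models \varphi$ fixed earlier. Therefore $\LName(\AName[\varphi][\infty]) = \set{\pi^{\infty} \in (\pow{\APSet})^{\omega}}{\pi^{\infty} \models \varphi}$, as claimed.
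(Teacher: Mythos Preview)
Your proposal is correct and follows essentially the same approach as the paper: both directions hinge on taking the \emph{first} index at which the run enters $F$, using that $\varrho$ and $\varrho'$ agree strictly before that point and that accepting states become self-looping sinks thereafter. The paper's proof is slightly terser and does not spell out the ``in particular'' clause, which you derive explicitly, but the argument is the same.
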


  \begin{proof}
	The proof proceeds by double implication.
	
	\begin{itemize}
		\item
		Assume $\pi^{\infty} \in \LName(\AName[][\infty])$ and let $\rho^{\infty} \in (2^{S})^{\omega}$ be an accepting run.
		Then, since $\varrho'(\sigma, s) = \{ s \}$ for every $s \in F$, we have that $\rho^{\infty}$ is of the form $s_{0} \cdot s_{1} \cdot \ldots \cdot s_{k - 1} \cdot s_{k}^{\omega}$, with 
		$s_{k} \in F$ and $s_{j} \notin F$ for all $j < k$.
		Then, consider the prefix $\pi = (\pi^{\infty})_{\leq k}$.
		By the definition of $\varrho'$, it follows that $s_{0} \cdot s_{1} \cdot \ldots \cdot s_{k - 1} \cdot s_{k}$ is an accepting run for $\pi$ in $\AName$, and so $\pi \in \LName(\AName)$.
		
		\item
		Let $k \in \SetN$ such that $\pi = (\pi^{\infty})_{\leq k} \in \LName(\AName)$ and let $s_{0} \cdot s_{1} \cdot \ldots \cdot s_{k}$ be an accepting run in $\AName$.
		Moreover, let $j \in \numcc{0}{k}$ be such that $s_{j} \in F$ and $s_{h} \notin F$ for all $h < j$%
		\footnote{Such a $j$ exists because $s_{k}$ is accepting.}
		Then, the infinite sequence $s_{0} \cdot s_{1} \cdot \ldots \cdot s_{j - 1} \cdot s_{j}^{\omega}$ is an accepting run of $\pi^{\infty}$ in $\AName[][\infty]$.
		Thus, we have that $\pi^{\infty} \in \LName(\AName[][\infty])$.
	\end{itemize}
\end{proof}

	To solve the \textsc{NE Membership} problem for \RMGF, we need to account for the fact that strategies adopted by agents must comply with their module specification.
	In terms of automata, we need to make sure that the accepted language is restricted to the infinite play that can be generated by such complying strategies.
	
	Let $\GName = \tuple{N, \Phi, m_1, \ldots, m_n, \gamma_{1}, \ldots, \gamma_{n}}$ be a \RMGF and consider the automaton $\AName[\GName] = \tuple{2^{\Phi}, S_{\GName}, s^{0}_{\GName}, F_{\GName}, \varrho_{\GName}}$ defined as:
	
	\begin{itemize}
		\item
		$S = 2^{\Phi} \cup \{\epsilon\}$;
		
		\item
		$s^{0}_{\GName} = \epsilon$;
		
		\item
		$F_{\GName} = S_{\GName}$;
		
		\item
		$\varrho_{\GName}(\epsilon, v) = 
		\begin{cases}
		v, & \text{ if } \exists g_{1} \in I_1, \ldots g_{n} \in I_n \emptyset \models guard(g_{i}) v = \bigcup_{i \in N}exec_{i}(g_i, \emptyset) \\
		\emptyset, & \text{ otherwise }
		\end{cases}$;

		\item
		$\varrho_{\GName}(s, v) = 
		\begin{cases}
		v, & \text{ if } \exists g_{1} \in U_1, \ldots g_{n} \in U_n \emptyset \models guard(g_{i}) v = \bigcup_{i \in N}exec_{i}(g_i, \emptyset) \\
		\emptyset, & \text{ otherwise }
		\end{cases}$
		
	\end{itemize}
	
	Intuitively, at every round of its execution, the automaton stores the current evaluation of the game in its state $s$.
	Then, when an evaluation $v$ is sent to it, the automaton moves to the state $v$ itself if, and only if, there exists a tuple of commands $g_1, \ldots, g_n$ that are enabled in $s$ and whose combined executions leads the game from $s$ to $v$.
	For the case no sets of enabled commands can lead to $v$, the automaton rejects the path, as it has found an illegal move in it.
	More formally, we have the following result.
	
	\begin{lemma}
		\label{lmm:RMGautomaton}
		Let $\GName$ be a \RMGF and $\AName_{\GName}$ its corresponding automaton.
		Then, it holds that $\LName(\AName_{\GName})$ is exactly the set of possible executions in $\GName$.
	\end{lemma}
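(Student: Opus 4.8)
The plan is to prove the two inclusions separately, in each case exploiting the fact that $\AName_{\GName}$ is an NFW all of whose states are accepting (since $F_{\GName} = S_{\GName}$), and that its transition relation sends a state to the singleton $\{v\}$ when a legal move to $v$ is available and to the empty set of successors (the ``otherwise'' case) when no combination of enabled guarded commands realises $v$. Reading any finite word $w = v_0 v_1 \cdots v_k \in (2^{\Phi})^{*}$, the states visited are forced to equal the letters read, so there is a single candidate run $\epsilon \to v_0 \to v_1 \to \cdots \to v_k$; because every state is accepting, $w$ is accepted if and only if this run never blocks, i.e. if and only if each individual transition along it is defined. The core of the argument is therefore to show that ``every transition is defined'' coincides exactly with ``$w$ is a possible execution of $\GName$''.

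First I would unfold the transition conditions. The initial transition out of $\epsilon$ on $v_0$ is defined iff there exist initialisation commands $g_1 \in I_1, \ldots, g_n \in I_n$, each enabled on the empty valuation ($\emptyset \models guard(g_i)$), with $v_0 = \bigcup_{i \in N} exec_i(g_i, \emptyset)$; and, for $1 \le t \le k$, the transition out of $v_{t-1}$ on $v_t$ is defined iff there exist update commands $g_1 \in U_1, \ldots, g_n \in U_n$, each enabled on $v_{t-1}$, with $v_t = \bigcup_{i \in N} exec_i(g_i, v_{t-1})$. These are precisely the two clauses defining compatibility of a strategy profile: the initial transition mirrors Item~1 (the opening action of each player is produced by an enabled \textbf{init} command on the empty valuation), and each subsequent transition mirrors Item~2 (each player's next action is produced by an \textbf{update} command enabled by the current valuation).

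For the inclusion ``execution $\Rightarrow$ language'', I would take a play generated by a compatible strategy profile $\vec{\sigma}$ together with a finite prefix $w$ of $\pi^{\infty}(\vec{\sigma})$. Compatibility supplies directly, at each round, the very guarded commands required by the transition conditions above, so the candidate run on $w$ is defined throughout and, all states being accepting, $w$ is accepted. For the converse inclusion ``language $\Rightarrow$ execution'', from an accepting run on $w$ I would read off, at each step, the witnessing tuple of enabled commands guaranteed by $\varrho_{\GName}$; these exhibit $w$ as respecting every module specification round by round, and I would then extend $w$ to a full compatible strategy profile (continuing with any enabled commands, recalling that by convention the controlled values are left unchanged when no command of a module is enabled, so the run can always be continued), witnessing that $w$ is a genuine possible execution.

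The step I expect to be the main obstacle is the clean matching between the \emph{local}, round-by-round legality condition captured by $\varrho_{\GName}$ and the \emph{global} notion of being generated by a compatible strategy profile, together with the bookkeeping of the simultaneous composition $\bigcup_{i \in N} exec_i(g_i, \cdot)$ over all players. In particular, care is needed to check that selecting one enabled command per module always yields a well-defined joint valuation — no variable is assigned twice, since the sets $\Phi_i$ partition $\Phi$ and each $exec_i(g_i, \cdot) \subseteq \Phi_i$ — and that the commands extracted from an accepting run can indeed be packaged into finite-state strategies, so that the automaton's local characterisation and the strategy-based definition of ``possible execution'' genuinely coincide.
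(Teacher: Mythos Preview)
Your core argument matches the paper's: both prove the two inclusions by observing that the unique candidate run on a word $v_0 v_1 \cdots$ is $\epsilon, v_0, v_1, \ldots$ (states equal letters), and that this run survives at step $h$ iff there exist enabled guarded commands $g_1,\ldots,g_n$ producing $v_h$ from $v_{h-1}$ (respectively from $\emptyset$ at $h=0$), which is exactly the local legality condition for an execution of $\GName$. The paper carries this out by a short induction showing $s_{h+1}=\pi^{\infty}_h$ for the forward direction, and for the converse simply reads off the witnessing commands from the non-blocking transitions.

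Two points where you diverge from the paper are worth noting. First, you frame $\AName_{\GName}$ as an NFW on finite words and speak of finite prefixes $w$; in the paper the automaton is used on \emph{infinite} plays $\pi^{\infty}$ (it is later intersected with B\"uchi automata), and the lemma is stated for infinite executions. With $F_{\GName}=S_{\GName}$ the B\"uchi acceptance condition is trivially ``the run never dies'', so your transition-by-transition analysis transfers verbatim to the infinite setting; you should simply run the induction over all $h\in\SetN$ rather than up to some $k$ and drop the talk of extending a finite prefix. Second, your anticipated ``main obstacle''---packaging the extracted commands into bona fide finite-state strategies---is not needed: the paper takes ``$\pi^{\infty}$ is an execution of $\GName$'' to mean precisely the round-by-round existence of enabled commands (an initial tuple from the $I_i$'s, then tuples from the $U_i$'s), so the local witnesses supplied by $\varrho_{\GName}$ already constitute the required execution without any global strategy construction. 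Your observation that the $\Phi_i$ partition $\Phi$, so the union $\bigcup_i exec_i(g_i,\cdot)$ is well-defined, is correct and harmless to record, but the paper does not dwell on it.
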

	
	\begin{proof}
		We prove the lemma by double inclusion.
		First, assume that $\pi^{\infty}$ is a play in $\GName$ and show that it is accepted by $\AName_{\GName}$.
		Let $s_0, s_1, \ldots$ be the run of $\AName_{\GName}$ over $\pi^{\infty}$.
		By induction on $h$, we prove that $s_{h+1} = \pi^{\infty}_{h}$, for every $h \in \SetN$ and so that the run is accepting.
		As base case, for $h = 0$, we have that $s_1 = \varrho_{\GName}(s_{0}, \pi^{\infty}_{0}) = \varrho_{\GName}(\epsilon, \pi^{\infty}_{0})$ and, since $\pi^{\infty}$ is a legal execution in $\GName$, there exist $g_{1} \in I_1, \ldots, g_{n} \in I_{n}$, with $\emptyset \models guard(g_{i})$, for all $i \in N$, such that $\pi^{\infty}_{1} = \bigcup_{i \in N} exec_{i}(g_i, \emptyset)$, this implying $\varrho_{\GName}(\epsilon, \pi^{\infty}_{0}) = \pi^{\infty}_{0}$, proving the statement for the base case.
		For the induction case, let $h \geq 0$ and assume $s_{h + 1} = \pi^{\infty}_{h}$.
		We have to prove that $s_{h + 2} = \pi^{\infty}_{h + 1}$.
		We have that $s_{h + 2} = \varrho_{\GName}(s_{h + 1}, \pi^{\infty}_{h + 1})$.
		Moreover, since $\pi^{\infty}$ is an execution in $\GName$, there exist $g_{1} \in U_1, \ldots, g_{n} \in U_{n}$, with $\pi^{\infty}_{h} \models guard(g_{i})$, for all $i \in N$, such that $\pi^{\infty}_{h + 1} = \bigcup_{i \in N} exec_{i}(g_i, \emptyset)$.
		Now, observe that by induction hypothesis, we have that $s_{h + 1} = \pi^{\infty}_{h}$, and so it follows that $s_{h + 2} = \varrho_{\GName}(s_{h + 1}, \pi^{\infty}_{h + 1}) = \varrho_{\GName}(\pi^{\infty}_{h}, \pi^{\infty}_{h + 1}) = \pi^{\infty}_{h + 1}$, the last equality following from the definition of $\varrho_{\GName}$.
		
		For the other direction, let us assume $\pi^{\infty} \in \LName(\AName_{\GName})$ and prove that $\pi^{\infty}$ is an execution in $\GName$.
		We have to show prove that there exist $g_{1} \in I_{1}, \ldots, g_{n} \in I_{n}$ such that $\emptyset \models guard(g_{i})$, for all $i \in N$ and $\pi^{\infty}_{0} = \bigcup_{i \in N}exec_{i}(g_{i}, \emptyset)$ and that, for all $h \in \SetN$, there exist $g_{1}^{h} \in U_{1}, \ldots, g_{n}^{h} \in U_{n}$ such that $\pi^{\infty}_{h} \models guard(g_{i}^{h})$, for all $i \in N$ and $\pi^{\infty}_{h + 1} = \bigcup_{i \in N}exec_{i}(g_{i}^{h}, \pi^{\infty}_{h})$.
		But this clearly follows from the definition of $\varrho_{\GName}$ and the fact that the run of $\AName_{\GName}$ over $\pi^{\infty}$ is accepting.
	\end{proof}

	We can now address \textsc{NE Membership}.
	We show that this problem is PSPACE-complete; for the membership argument, we
	employ an automata-based algorithm for checking membership.
	We first introduce, for a given (machine) strategy $\sigma_{i} =
	\tuple{S_{i}, s_{i}^{0}, \delta_{i}, \tau_{i}}$ for a player $i$, a
	corresponding DFW 
	$\AName(\sigma_{i}) =
	\tuple{\Sigma, Q_{i}, q_{i}^{0}, \varrho_{i}, F_{i}}$ where:
	$\Sigma = \pow{\APSet}$ is the alphabet set,
	$Q_{i} = (S_{i} \times \pow{\APSet}) \cup \{ \sink \}$ is
	the state set, where $\sink \notin S_{i} \times \pow{\APSet}$ is a fresh state,
	$q_{i}^{0} = (s_{i}^{0}, \emptyset)$ is the initial state,
	$F_{i} = S_{i} \times \pow{\APSet}$ is the final state set, and 
	$\varrho_{i}$ is the transition relation such that, for all $(s,
	v) \in S_{i} \times \pow{\APSet}$ and $v' \in \Sigma$,
	
	\begin{itemize}
		
		\item
		$\varrho_{i}((s, v), v') = 
		\left\{
		\begin{array}{cl}
		\delta_{i}( s , v ), & \text{ if } \tau_{i}(s) =
		v'|_{\APSet_{i}} \\
		
		\sink, & \text{ otherwise }
		
		\end{array}
		\right.
		$, and
		
		\item
		$\varrho_{i}(\sink, v') = \sink$
		
	\end{itemize}
	
	Let $\LName(\AName(\sigma_{i}))$ denote the set of infinite words in $(\pow{\APSet})^{\omega}$ accepted by $\AName(\sigma_{i})$.
	It is easy to see that such a set is exactly the same set of plays that are
	possible outcomes in a game where player $i$ uses strategy $\sigma_{i}$.
	Similarly, for a given set of players $A \subseteq \AgnSet$ and a partial
	strategy profile $\vec{\sigma}_{A}$, we have that, for $\AName(\vec{\sigma}_{A})
	= \bigotimes_{i \in A}\AName(\sigma_{i})$, the product of these
	automata, the language $\LName(\AName(\vec{\sigma}_{A}))$ contains exactly
	those infinite plays in a game where players in $A$ play according to the
	strategies given in $\vec{\sigma}_{A}$.
	Moreover, in~\cite{DV15} it is shown, for every \LDLF formula $\varphi$, how to build and check on-the-fly a NFW $\AName[\varphi] = \tuple{S, \pow{\APSet}, \{s_{0}\}, \delta, \{s_{f}\}}$, such that, for every finite trace $\pi \in (\pow{\APSet})^{*}$, we have $\pi \models \varphi$ if and only if $\pi \in \LName(\AName[\varphi])$, where by $\LName(\AName[\varphi])$ we denote the language of finite words (that is, the language of finite traces over ${\pow{\APSet}}$) accepted by the automaton $\AName[\varphi]$.
	Such a construction makes use of a function $\delta$ simulating the transition relation of the corresponding 
	alternating finite word automaton (AFW), which takes a subformula $\psi$ of $\varphi$ and a valuation of variables $\Pi \subseteq \APSet$, and recursively returns a combination of subformulae.
	A suitable modification of such an algorithm allows one to construct the NBW $\AName[\varphi][\infty]$.
	As a matter of fact, observe that the only final state $s_{f}$ of the automaton $\AName[\varphi]$ built in \cite{DV15} does not have any outgoing transition.
	Then, given the construction of $\AName[\varphi][\infty]$, we only need to add a loop to it, for every possible valuation.
	
	\begin{algorithm}[h!]
		\caption{\label{alg:ldlnbwprod} Intersection contruction.}
		\textbf{Input}: an \LDLF formula $\varphi$ and an NBW $\AName = \tuple{\pow{\APSet}, Q, q_{0}, \varrho, \F}$.
		
		\textbf{Output}: NBW $\AName[\varphi][\infty] \times \AName = 
		\tuple{\pow{\APSet}, S', \{s_{0}'\}, \F', \varrho'}$.
		
		$s_{0}' \leftarrow \{ (\varphi, q_{0}, 1) \}$ \; 
		$\F' \leftarrow \{ \emptyset \} \times Q \times \{1\}$ \;
		
		$S \leftarrow \{s_{0}\} \cup \F'$ \;
		$\varrho \leftarrow \set{((\emptyset, q, 1), \Pi, (\emptyset, q', 2))}{\Pi \in \pow \APSet \wedge q' \in \varrho(q, \Pi)} \allowbreak \cup \set{((\emptyset, q, 2), \Pi, (\emptyset, q', 2))}{\Pi \in \pow \APSet \wedge q' \in \varrho(q, \Pi) \wedge q \in Q \setminus \F'} \allowbreak \cup \set{((\emptyset, q, 2), \Pi, (\emptyset, q', 1))}{\Pi \in \pow \APSet \wedge q' \in \varrho(q, \Pi) \wedge q \in \F'}$ \;

		\While{($S'$ or $\varrho'$ change)}{
			\For{$s \in S'$, $q \in Q$ and $\Pi \in \pow{\APSet}$}{
				\For{$q' \subseteq CL(\varphi)$ and $q' \in \varrho(q, 		\Pi)$}{
					\If{$s' \models \bigwedge_{\psi \in q}\delta(\psi, \Pi)$}{
						\eIf{$q \in \F'$}{
							$S' \leftarrow S' \cup \{(s, q, 2), (s', q', 1) \}$ \;
							$\varrho' \leftarrow \varrho' \cup \{((s, q, 2), \Pi, (s', q', 1)) \}$\;
						}{
							$S' \leftarrow S' \cup \{(s', q', 1), (s', q', 2), (s, q, 2) \}$ \;
							$\varrho' \leftarrow \varrho' \cup \{((s, q, 2), \Pi, (s', q', 1)) \}$\;
						}
					}
					
				}
			}
			
		}
	\end{algorithm}
	
	However, it cannot be used as it is to obtain the PSPACE complexity for the \textsc{NE Membership} problem.
	Indeed, we need to combine the NBW $\AName[\varphi][\infty]$ with the automata $\AName[\vec{\sigma}]$ and $\AName[\vec{\sigma}_{ - i}]$ provided by the \textsc{NE Membership} problem instance.
	To do this, we need to adapt the construction in order to handle these products.
	Note that both $\AName[\vec{\sigma}]$ and $\AName[\vec{\sigma}_{ - i}]$ can be considered as NBW.
	Thus, it is enough to deliver an algorithm that builds an automaton intersection between $\AName[\varphi][\infty]$ and a generic NBW $\AName$.
	
	\begin{algorithm}[H]
		\caption{\label{alg:RMGmembership} NE Membership for \RMGF.}
		\textbf{Input}: a \RMGF $\GName$ and a strategy profile
		$\strpElm$.
		
		\textbf{Output}: ``Yes'' if $\strpElm \in \NE(\GName)$; ``No'' otherwise.
		
		\If{$\LName(\AName(\strpElm)) \not\subseteq \LName(\AName_{\GName})$}{
			\Return ``Error''
		}
		
		\For{$i \in \AgnSet$}{
			\If{$\LName(\AName(\strpElm) \otimes \AName[\gamma_{i}][\infty]) =
				\emptyset$}{
				\If{$\LName(\AName(\strpElm[-i]) \otimes \AName_{\GName} \otimes \AName[\gamma_{i}][\infty])
					\neq \emptyset$}{
					\Return ``No''
				}
			}
			
		}
		\Return ``Yes''
	\end{algorithm}
	
	When one of the two NBW involved in the product derives from an \LDLF formula $\varphi$, we can adapt the on-the-fly construction provided in~\cite{DV15}, as described in Algorithm~\ref{alg:ldlnbwprod}.
	
	With this construction in place, one can show that Algorithm~\ref{alg:RMGmembership} runs in PSPACE and solves \textsc{NE Membership} for RMG$_F$.
	In particular, the algorithm checks, using the automata constructions presented before, whether a strategy profile is a Nash equilibrium by checking for beneficial deviations in the game for every player.

	\begin{theorem}
		\label{thm:ne-membershipiBG}
		The \textsc{NE Membership} problems for \iBGF and RMG$_{F}$ are PSPACE-complete. 
	\end{theorem}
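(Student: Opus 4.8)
The plan is to prove the two bounds separately: membership in PSPACE for \RMGF (the larger class), and PSPACE-hardness already for \iBGF (the smaller class). Since \iBGF is the special case of \RMGF with no module constraints, and PSPACE is closed under complement, this pins down the complexity of both problems simultaneously.

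For the upper bound, I would argue that Algorithm~\ref{alg:RMGmembership} decides \textsc{NE Membership} and runs in polynomial space. Correctness follows by unfolding the definition of Nash equilibrium: $\vec{\sigma} \notin \NE(\GName)$ exactly when some player $i$ has a beneficial deviation, i.e. when $\pi^{\infty}(\vec{\sigma}) \not\models \gamma_i$ and yet there is a compatible $\sigma_i'$ with $\pi^{\infty}(\vec{\sigma}_{-i}, \sigma_i') \models \gamma_i$. I would translate each of these two conditions into an automaton (non)emptiness test. By Theorem~\ref{thm:loop}, $\LName(\AName[\gamma_i][\infty])$ is precisely the set of infinite plays satisfying $\gamma_i$; since $\AName(\vec{\sigma})$ accepts only the single play $\pi^{\infty}(\vec{\sigma})$, the product $\AName(\vec{\sigma}) \otimes \AName[\gamma_i][\infty]$ is empty iff $\pi^{\infty}(\vec{\sigma}) \not\models \gamma_i$. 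Likewise, combining the fixed opponents' automaton $\AName(\vec{\sigma}_{-i})$ with the legality automaton $\AName_{\GName}$ of Lemma~\ref{lmm:RMGautomaton} and with $\AName[\gamma_i][\infty]$, the resulting language is nonempty iff there is a legal execution in which the players other than $i$ follow $\vec{\sigma}_{-i}$ and $\gamma_i$ is met --- that is, iff $i$ has a deviation realising $\gamma_i$. The preliminary containment test $\LName(\AName(\vec{\sigma})) \subseteq \LName(\AName_{\GName})$ merely discards profiles whose induced play is not a legal execution (for \iBGF this step is vacuous, as every play is legal).

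The delicate point I would have to handle carefully is the equivalence between a nonempty product language and the existence of an actual deviating finite-state strategy. Here I would use that all automata involved are finite, so a nonempty $\omega$-language contains an ultimately periodic witness $\pi^{\infty}$; from its lasso shape I can read off a finite-state machine $\sigma_i'$ for player $i$ that reproduces $\pi^{\infty}$, and the presence of $\AName_{\GName}$ in the product guarantees that $\sigma_i'$ is compatible with module $m_i$. The converse --- any compatible deviation yields an accepted word --- is immediate from the defining properties of $\AName(\sigma_i)$ and $\AName_{\GName}$. For the complexity, the only automaton of exponential size is $\AName[\gamma_i][\infty]$, but each of its states is a set of subformulae of $\gamma_i$ and hence has polynomial size; it is generated on the fly via Algorithm~\ref{alg:ldlnbwprod}. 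Every (non)emptiness check amounts to searching for a reachable accepting lasso, which a nondeterministic procedure performs while storing only a constant number of polynomial-size states; by Savitch's theorem $\mathrm{NPSPACE} = \mathrm{PSPACE}$, and the outer loop over the $n$ players contributes only a polynomial factor, so the whole algorithm runs in PSPACE.

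For the lower bound, I would reduce from \LDLF satisfiability, which is PSPACE-complete. Given an \LDLF formula $\varphi$ over $\APSet$, introduce a fresh proposition $p$ and build a one-player \iBGF in which the single agent controls $\APSet \cup \{p\}$ and has goal $\gamma = \varphi \wedge \Exs{\top^{*}} p$. Take the profile $\vec{\sigma}$ in which the agent always plays the empty valuation, so the induced play never sets $p$ and hence $\pi^{\infty}(\vec{\sigma}) \not\models \gamma$. Because the agent controls every variable and $p$ is fresh, the agent can realise $\gamma$ on some play iff $\varphi$ is satisfiable; thus the agent has a beneficial deviation iff $\varphi$ is satisfiable, i.e. $\vec{\sigma} \in \NE(\GName)$ iff $\varphi$ is unsatisfiable. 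Since the construction is polynomial and PSPACE is closed under complement, this establishes PSPACE-hardness for \iBGF, and therefore for \RMGF as well. Overall, the main obstacle is the upper-bound correctness argument --- in particular the faithful back-and-forth between automaton witnesses and finite-state deviating strategies --- whereas the hardness reduction is routine once \LDLF satisfiability is invoked.
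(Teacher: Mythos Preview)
Your proposal is correct and follows essentially the same route as the paper: the upper bound is established via Algorithm~\ref{alg:RMGmembership} with on-the-fly product constructions and PSPACE emptiness checks, and the lower bound is the same one-player reduction from \LDLF satisfiability (the paper uses $\gamma_1 = \varphi \wedge p$ rather than $\varphi \wedge \Exs{\top^*} p$, but both work identically). Your explicit treatment of the ultimately periodic witness to extract a compatible deviating strategy is in fact more careful than the paper, which leaves this step implicit.
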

	
	\begin{proof}
		
		To show that Algorithm~\ref{alg:RMGmembership} is correct, assume that the algorithm returns ``Yes'' on a given instance $(\GName, \vec{\sigma})$.
		This means that it does not return the ``Error'' message in the initial conditional check.
		This means that the outcome of $\vec{\sigma}$ is in the language of $\AName[G]$ and so the strategies are complying with their modules specifications.
		Moreover, the algorithm does not return ``NO'', which means that, for every agent $i$, either the innermost or the outermost conditional checks are false.
		In case the outermost is false, then we have that $\LName(\AName(\vec{\sigma})) \cap \LName(\AName[\gamma_{i}]) \neq \emptyset$, meaning that the play
		$\pi^{\infty}(\vec{\sigma})$ is such that $\pi^{\infty}(\vec{\sigma}) \models \gamma_{i}$.
		Thus, player $i$ is satisfied in the context $\vec{\sigma}$ and so it does
		not have any incentive to deviate from it.
		On the other hand, if the outermost returns true but the innermost returns false, then we have that $\LName(\AName(\vec{\sigma}_{-i})) \cap \LName(\AName[\gamma_{i}]) \cap \LName(\AName[\GName])
		= \emptyset$, which means that the satisfaction of $\gamma_{i}$ is
		incompatible with the partial strategy profile $\vec{\sigma}_{-i}$, no matter how player $i$ behaves compatibly with its modules specification.
		This, in terms of strategies, implies that there is no beneficial deviation for player $i$ to get its goal achieved.
		Hence, the strategy profile $\vec{\sigma}$ is a Nash equilibrium of the game. 
		
		On the other hand, assume $\vec{\sigma}$ is a Nash equilibrium.
		Then, no player $i$ has an incentive to deviate.
		This can be the case for two reasons: either $\pi^{\infty}(\vec{\sigma})
		\models \gamma_{i}$, or there is no compatible strategy $\sigma_{i}'$ such that $\pi^{\infty}(\vec{\sigma}_{-i}, \sigma_{i}') \models \gamma_{i}$.
		If the former, then we have that $\LName(\AName(\vec{\sigma})) \cap
		\LName(\AName[\gamma_{i}]) \neq \emptyset$ and so the check on the outermost conditional is false.
		If the latter, then it follows that $\LName(\AName(\vec{\sigma}_{-i}))
		\cap \LName(\AName[\gamma_{i}]) \cap \LName(\AName[\GName]) = \emptyset$, making the check on innermost conditional is false.
		Since this reasoning holds for every player $i$, it can be concluded that
		Algorithm~\ref{alg:RMGmembership} ends by  returning ``Yes'',  which concludes the proof of correctness. 
		
		Regarding the complexity, note that all the conditional checks involve a nonemptiness test of NFW built by means of the Algorithm~\ref{alg:ldlnbwprod}, whose complexity is PSPACE.
		Since this procedure is called $n$ times, where $n$ is the number of
		agents, we obtain a PSPACE upper bound.
		
		We now show hardness on \iBGF by providing a reduction from the satisfiability problem of
		\LDLF formulae, which is known to be PSPACE-complete~\cite{DV13}.
		Consider an \LDLF formula $\varphi$ and then define the one-player game
		$\GName$, with player set $\{ 1 \}$, in which player $1$ controls all the
		variables in $\varphi$ plus an additional variable $\{ p \}$ that does
		not appear in $\varphi$, and whose goal is $\gamma_1 = \varphi \wedge
		p$. 
		Moreover, let $\sigma$ be the strategy for player $1$ that myopically plays
		$\emptyset$ in all rounds. Then, such a strategy, which clearly is linear in
		the size of $\varphi$ since it has constant size, is such that $\sigma
		\models \neg \gamma_1$. 
		Now, we will show based on this reduction that $\varphi$ is satisfiable if
		and only if $\sigma \not \in \NE(\GName)$.
		Firstly, if $\varphi$ is satisfiable then player $1$ can (beneficially)
		deviate to a myopic strategy, say $\sigma'$, that generates an infinite play
		with $p$ in the first round and such that one of its prefixes satisfies
		$\varphi$---in which case $\sigma' \models \gamma_1$.
		Then, $\sigma \not \in \NE(\GName)$. 
		On the other hand, if $\varphi$ is not satisfiable, then $\gamma_1$ is not
		satisfiable either. 
		Then, it is clear that there is no strategy $\sigma'$ to which player $1$ 
		can beneficially deviate to achieve its goal; hence $\sigma$ is a Nash
		equilibrium of $\GName$. 
		Because PSPACE is closed under complement, PSPACE-hardness follows. 
	\end{proof}

	\section{NE Non-Emptiness and Equilibrium Checking Problems}
	
	Now, let us study \textsc{NE Non-Emptiness} for both \iBGF and \RMGF.
	We first prove a result for \iBGF and then how to adapt it for the case of \RMGF. 
	Also in this case we use an automata-theoretic approach.
	We show how, given a game $\GName$, it is possible to construct an alternating
	automaton $\AName[\NE](\GName)$ such that $\AName[\NE](\GName)$ accepts
	precisely the set of plays that are generated by the Nash equilibria of
	$\GName$. 
	A distinguishing feature of our automata technique is that it is
	\emph{language preserving}, that is, $\AName[\NE](\GName)$ recognizes exactly
	the set of plays that are obtained by some Nash equilibrium in the game.
	Hereafter, we call {\em Nash runs} the elements in such a set of runs.
	This property of our construction is the key to show that the set of Nash 
	runs is, in fact, $\omega$-regular.
	Also, note that as we now have to find (and not simply check) a strategy
	profile, we cannot use the automata of the form $\AName(\sigma_{i})$ provided
	above, as there is no known strategy $\sigma_{i}$, for each player
	$i$, that can be used here. 
	
	First, we recall the characterisation of Nash equilibria provided in~\cite{GHW15a}.
	For a given \RMGF $\GName = \tuple{N, \Phi, m_1, \ldots, m_n, \gamma_{1}, \ldots, \gamma_{n}}$ and a designated player $j \in \AgnSet$, we say that $\vec{\sigma}_{-j}$ is a punishment profile against $j$ if, for every strategy $\sigma_{j}'$, it holds that $(\vec{\sigma}_{-j}, \sigma_{j}') \not\models \gamma_{j}$.
	In~\cite{GHW15a}, it has been proven that $\vec{\sigma} \in \NE(\GName)$ if and only if there exists $W \subseteq \AgnSet$ such that $\sigma \models \gamma_{i}$ for every $i \in W$ and, for every $j \in L = \AgnSet \setminus W$, the profile $\vec{\sigma}_{-j}$ is a punishment strategy against $j$, that is, a winning strategy profile of the coalition of players $\AgnSet_{-j}$ for the negation of the goal of player $j$. 
	
	Thus, we can think of finding punishment strategies in terms of synthesizing a finite state machine controlling $\APSet_{-j}$.
	To do this, we apply an automata-theoretic approach.
	First of all, we build the alternating Rabin word automaton (ARW) $\AName[\gamma_{j}]$, used to recognize the models of $\gamma_{j}$, and then the product $\AName[\gamma_{j}][\GName] = \AName[\gamma_{j}] \oplus \AName[\GName]$, filtering the models that are compatible with an execution of the RMG$_{F}$ $\GName$.
	Analogously, the automaton $\AName_{\overline{\gamma_{i}}}^{\GName} = \overline{\AName_{\gamma_{i}}} \otimes \AName_{\GName}$ recognizes the plays that both are compatible with the \RMGF $\GName$ and do not satisfy $\gamma_{i}$.
	At this point, by means of Theorem 2 in~\cite{PR89}, we build a nondeterministic Rabin automaton on trees (NRT) $\overline{\AName[\gamma_{j}][\GName]'}$ that recognizes exactly those trees $T$ that are obtained from an execution of a compatible winning strategy of the coalition $\AgnSet_{-j}$ when the goal is to avoid the satisfaction of $\gamma_{j}$.
	Now, following Corollary 17 in~\cite{NW98}, we can build a NRW $\overline{\AName[\gamma_{j}][\GName]''}$ such that $\LName(\overline{\AName[\gamma_{j}][\GName]''}) = \set{\pi^{\infty} \in (\pow{\APSet})^{\omega}}{\exists T \in \LName(\overline{\AName[\gamma_{j}][\GName]'}). \ \pi^{\infty} \subseteq T}$, where by $\pi^{\infty} \subseteq T$ we denote the fact that $\pi^{\infty}$ is a branch of the tree $T$ starting at its root.
	
	Now, let us fix $W \subseteq \AgnSet$ for a moment, and consider the
	product automaton $\AName[\overline{L}] = \bigotimes_{j \in L}
	\overline{\AName[\gamma_{j}][\GName]''}$.
	By the semantics of the product operation we obtain that $\AName[\overline{L}][\GName]$ accepts those paths that are generated by some punishment profile, compatible with $\GName$, for each $j \in L$.
	Moreover, consider the automaton $\AName_{W} = \bigotimes_{i \in W} \AName[\gamma_{i}][\GName]$, recognizing the paths that satisfy every $\gamma_{i}$, for $i \in W$.
	Thus, we have that the product automaton $\AName_{W} \otimes \AName[\overline{L}]$ accepts exactly those paths for which every $\gamma_{i}$, with $i \in W$, is satisfied while, for each $j \in L$ the coalition $\AgnSet_{-j}$ is using a punishment strategy against $j$.
	Now, in order to exploit the characterisation given in~\cite{GHW15a}, we only need to quantify over $W \subseteq \AgnSet$.
	This, in terms of automata, corresponds to the union operation.
	Then, we get the following automata characterisation:
	
	\begin{center}
		$\AName[\NE](\GName) = \bigoplus_{W \subseteq \AgnSet} 
		(\AName_{W} \otimes \AName[\overline{L}]$).
	\end{center}
	
	\begin{theorem}[\RMGF Expressiveness]
		\label{thm:NEreg}
		For a \RMGF game $\GName$, the automaton $\AName[\NE](\GName)$ recognizes the set of Nash runs of $\GName$.
		Therefore, the set of Nash runs of $\GName$ is $\omega$-regular. 
	\end{theorem}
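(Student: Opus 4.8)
The plan is to establish the language equality $\LName(\AName[\NE](\GName)) = \set{\pi^{\infty}(\vec{\sigma})}{\vec{\sigma} \in \NE(\GName)}$ by double inclusion, leaning on the characterisation of Nash equilibria from \cite{GHW15a} recalled just above: a profile $\vec{\sigma}$ lies in $\NE(\GName)$ exactly when there is a set $W \subseteq \AgnSet$ of ``winners'' with $\pi^{\infty}(\vec{\sigma}) \models \gamma_i$ for all $i \in W$, while for every loser $j \in L = \AgnSet \setminus W$ the partial profile $\vec{\sigma}_{-j}$ punishes $j$. Since the automaton is the union $\bigoplus_{W \subseteq \AgnSet}(\AName_W \otimes \AName[\overline{L}])$, it suffices to show that, for each fixed $W$, the component $\AName_W \otimes \AName[\overline{L}]$ accepts precisely the plays witnessed by $W$ in this characterisation; the outer union then realises the existential quantifier over $W$.

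For the inclusion of Nash runs into the language, I would take $\vec{\sigma} \in \NE(\GName)$, set $\pi^{\infty} = \pi^{\infty}(\vec{\sigma})$, and fix a witnessing $W$. For each $i \in W$, satisfaction $\pi^{\infty} \models \gamma_i$ together with compatibility of $\pi^{\infty}$ with $\GName$ (Lemma~\ref{lmm:RMGautomaton}) places $\pi^{\infty}$ in $\LName(\AName[\gamma_i][\GName])$, hence in $\LName(\AName_W)$. For each $j \in L$, the profile $\vec{\sigma}_{-j}$ is a punishment, so it induces a tree $T$ accepted by $\overline{\AName[\gamma_j][\GName]'}$ all of whose branches avoid $\gamma_j$; the actual play $\pi^{\infty}$ is exactly the branch of $T$ along which $j$ follows $\sigma_j$, so $\pi^{\infty} \subseteq T$ and therefore $\pi^{\infty} \in \LName(\overline{\AName[\gamma_j][\GName]''})$ by the defining property of the \cite{NW98} projection. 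Intersecting over $j \in L$ gives membership in $\LName(\AName[\overline{L}])$, and combined with $\AName_W$ we conclude $\pi^{\infty} \in \LName(\AName_W \otimes \AName[\overline{L}])$.

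For the reverse inclusion I would start from $\pi^{\infty}$ accepted by some component $\AName_W \otimes \AName[\overline{L}]$ and reconstruct a Nash equilibrium producing it. Acceptance yields $\pi^{\infty} \models \gamma_i$ for all $i \in W$, and for every $j \in L$ a winning tree $T_j \in \LName(\overline{\AName[\gamma_j][\GName]'})$ with $\pi^{\infty} \subseteq T_j$; by \cite{PR89} each $T_j$ arises from a finite-memory, module-compatible punishment profile $\vec{\sigma}_{-j}$. I would then assemble a single profile $\vec{\sigma}$ by the standard \emph{follow-the-path-and-punish} construction: every player acts so as to generate $\pi^{\infty}$ as long as the observed history is a prefix of $\pi^{\infty}$, and as soon as some player $j$ is detected to have unilaterally deviated, the remaining players switch to the punishment strategy read off $T_j$. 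By construction $\pi^{\infty}(\vec{\sigma}) = \pi^{\infty}$; no loser $j \in L$ can profit, since any deviation triggers $\vec{\sigma}_{-j}$ under which $\gamma_j$ is unachievable; and no winner $i \in W$ has an incentive, since $\gamma_i$ already holds on $\pi^{\infty}$. Hence $\vec{\sigma} \in \NE(\GName)$ and $\pi^{\infty}$ is a Nash run.

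The hard part will be the assembly step in the reverse direction: the punishment trees $T_j$ are synthesised independently for different losers, so one must check that the ``detect the deviator and switch'' profile is a single well-defined, finite-state, module-compatible machine. This works because Nash deviations are unilateral, so at each history at most one deviator must be identified and the corresponding $T_j$ selected unambiguously; finiteness of memory is inherited from the regularity of the \cite{PR89} strategies, and module-compatibility is guaranteed by the presence of $\AName[\GName]$ inside every product $\AName[\gamma_j][\GName]$ and $\AName[\gamma_i][\GName]$. Finally, $\omega$-regularity is immediate once the language equality is in place: $\AName[\NE](\GName)$ is obtained from the automata $\AName[\gamma_j]$ by the operations $\oplus$ and $\otimes$, the \cite{PR89} tree construction, the \cite{NW98} projection, and the union $\bigoplus_{W}$, each of which preserves $\omega$-regularity and yields a Rabin-type word automaton, so its language---the set of Nash runs---is $\omega$-regular.
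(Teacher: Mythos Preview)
Your proposal follows the same two-inclusion structure as the paper's proof, invokes the same \cite{GHW15a} characterisation, and assembles the equilibrium profile via the same ``follow $\pi^{\infty}$, then punish the first deviator'' construction. There is, however, one step you gloss over that the paper makes explicit. In the reverse inclusion you write that ``every player acts so as to generate $\pi^{\infty}$ as long as the observed history is a prefix of $\pi^{\infty}$''; but strategies in this framework are by definition \emph{finite-state} machines, and following an arbitrary infinite word $\pi^{\infty}$ is not a finite-state behaviour. The paper addresses this by first observing (citing \cite{CS85}) that $\pi^{\infty}$ may without loss of generality be taken ultimately periodic, so that a finite-state machine $\Delta_{\pi^{\infty}}$ generating it exists, and then builds each player's strategy $\sigma_i$ as an explicit product of $\Delta_{\pi^{\infty}}$ with the punishment machines $\Delta_j$ (for $j \in L$) together with a flag component in $L \cup \{\top\}$ recording which loser, if any, has deviated. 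Your remark that ``finiteness of memory is inherited from the regularity of the \cite{PR89} strategies'' covers only the punishment components $\Delta_j$, not the main-line component that must track $\pi^{\infty}$ itself; without the ultimate-periodicity step the assembled profile need not be a legal strategy profile in the model, and the inclusion $\LName(\AName[\NE](\GName)) \subseteq \{\text{Nash runs}\}$ would not go through.
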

	
	\begin{proof}
		We prove the theorem by double implication.
		From left to right, assume that $\pi^{\infty} \in \LName(\AName[\NE](\GName))$.
		Then, there is $W \subseteq \AgnSet$ such that $\pi^{\infty} \in
		\LName(\AName_{W} \otimes \AName[\overline{L}])$.
		Observe that, w.l.o.g.\ we can assume that $\pi^{\infty}$ is an ultimately periodic play~\cite{CS85} and so that there exists a finite-state machine
		$\Delta_{\pi^{\infty}} = (Q_{\pi^{\infty}}, q_{\pi^{\infty}}^{0},
		\delta_{\pi^{\infty}}, \tau_{\pi^{\infty}})$, controlling all the variables in
		$\APSet$, {\em i.e.}, $\tau_{\pi^{\infty}}: Q_{\pi^{\infty}} \to \pow{\APSet}$, that
		generates $\pi^{\infty}$.
		Moreover, observe that, for each $j \in L$, $\pi^{\infty} \in
		\LName(\overline{\AName[\gamma_{j}][\GName]''})$ implies that there exists
		$T_{j} \in \LName(\overline{\AName[\gamma_{j}][\GName]'})$ such that $\pi^{\infty}
		\subseteq T_{j}$.
		This implies that, for each $j \in L$, there is a finite-state
		machine $\Delta_{j} = (Q_{j}, q_{j}^{0}, \delta_{j}, \tau_{j})$,
		controlling all the variables but $\APSet_{j}$, {\em i.e.}, $\tau_{j}: Q_{j} \to
		\pow{\APSet[ - j]}$, that generates the branches of $T_{j}$, according to
		the output of variables in $\APSet_{j}$, including $\pi^{\infty}$.
		Now, for each $i \in \AgnSet$, define the strategy
		$\sigma_{i} = (S_{i}, s_{i}^{0}, \delta_{i}, \tau_{i})$ as follows:
		
		\begin{itemize}
			\item
			$S_{i} = Q_{\pi^{\infty}} \times \bigtimes_{j \in L} Q_{j}
			\times (L \cup \{ \top \})$ is the product of the state-space of
			$\Delta_{\pi^{\infty}}$ together with the state-space of each 
			$\Delta_{j}$, for each $j \in L$, plus a flag component
			given by $L \cup \{ \top \}$;
			
			\item
			$s_{i}^{0} = (q_{\pi^{\infty}}^{0}, q_{j_{1}}^{0}, \ldots,
			q_{j_{\card{L}}}^{0}, \top)$, collecting all the initial states of
			the finite state machines, $\Delta_{\pi^{\infty}}$ and $\Delta_{j}$, for
			each $j \in L$, flagged with the symbol $\top$;
			
			\item
			$\delta_{i}$ is defined as follows: for each $(q, q_{j_{1}}, \ldots, q_{\card{L}}, \top)$ and $v \in \pow{\APSet}$, $\delta_{i}((q, q_{j_{1}}, \allowbreak \ldots, q_{\card{L}}, \top), v) = (\delta_{\pi^{\infty}}(q, v), \delta_{j_{1}}(q_{j_{1}}, v), \allowbreak \ldots, \delta_{j_{\card{L}}}(q_{\card{L}}, v), \mthfun{flag})$, where $\mthfun{flat} = \top$ if $v = \tau_{\pi^{\infty}}(q)$ and $\mthfun{flat} = j$ if $v_{ - j} = (\tau_{\pi^{\infty}}(q))_{-j}$ and $v_{j} \neq (\tau_{\pi^{\infty}}(q))_{j}$.
			
			\item
			$\tau_{i}((q, q_{j_{1}}, \ldots, q_{\card{L}}, \top)) = 
			(\tau_{\pi^{\infty}}(q))_{i}$ and $\tau_{i}((q, q_{j_{1}},
			\ldots, q_{\card{L}}, \allowbreak j)) = (\tau_{j}(q))_{i}$, for each $j 
			\in L$.
		\end{itemize}
		
		Intuitively, a strategy $\sigma_{i}$ for player $i$ runs in parallel the $i$-th component of the finite-state machine $\Delta_{\pi^{\infty}}$ together with the $i$-th components of the finite-state machines $\Delta_{j}$ that win against the deviating players in $L$.
		Note that, by construction, as long as nobody deviates, the outcome of every single $\Delta_{j}$ corresponds to the one of $\Delta_{\pi^{\infty}}$.
		We have that the strategy profile $\vec{\sigma}$, given by the union of the strategies defined above, generates $\pi^{\infty}$, and, as soon as a unilateral deviation occurs from player $j \in L$, the partial strategy profile $\vec{\sigma}_{-j}$ starts following the finite-state machine $\Delta_{j}$, which is by definition winning against~$j$.
		Thus, $\vec{\sigma}$ is a Nash equilibrium.
		
		From right to left, assume that $\pi^{\infty}$ is a Nash run and let $\vec{\sigma}$
		be a Nash equilibrium such that $\pi^{\infty}(\vec{\sigma}) = \pi^{\infty}$.
		Moreover, let $W = \set{i \in \AgnSet}{\pi^{\infty} \models \gamma_{i}}$.
		We show that $\pi^{\infty} \in \LName(\AName_{W} \otimes
		\AName[\overline{L}])$.
		Since $\pi^{\infty} \models \gamma_{i}$, for each $i \in W$, we have that
		$\pi^{\infty} \in \LName(\AName_{W})$.
		Moreover, let $j \in L$.
		It holds that $j$ does not have a beneficial deviation from $\vec{\sigma}$ and
		so we have that $\vec{\sigma}_{-j}$ is a winning strategy against $j$.
		From the definition of $\overline{\AName[\gamma_{j}][\GName]'}$ we have that the
		tree-execution $T_{j}$ generated by $\vec{\sigma}_{-j}$ is in
		$\LName(\overline{\AName[\gamma_{j}][\GName]'})$.
		Now, since $\pi^{\infty} \subseteq T_{-j}$, we have that $\pi^{\infty} \in
		\LName(\overline{\AName[\gamma_{j}][\GName]''})$, for each $j \in L$, implying
		that $\pi^{\infty} \in \LName(\AName[\overline{L}])$.
		Hence, we have that $\pi^{\infty} \in
		\LName(\AName_{W}) \cap \LName(\AName[\overline{L}]) =
		\LName(\AName_{W} \otimes \AName[\overline{L}])$, as required. 
	\end{proof}

	Using Theorem~\ref{thm:NEreg} we can address the problem of deciding
	if a game admits a Nash equilibrium by checking 
	$\AName[\NE](\GName)$ for emptiness. 
	Regarding the complexity of building $\AName[\NE](\GName)$, 
	observe that the construction of each
	automaton $\overline{\AName[\gamma_{j}][\GName]'}$, provided in~\cite{PR89}, is of
	size doubly exponential with respect to $\card{\gamma_{j}}$.
	Moreover, all the other operations used to build $\AName[\NE](\GName)$
	involve union and intersection of Rabin automata, which can be performed in
	time polynomial in the size of the constituting components.
	This shows that $\AName[\NE](\GName)$ is a nondeterministic Rabin automaton on
	words of size doubly exponential with respect to the game $\GName$.
	Since checking emptiness of a NRW can be done in NLOGSPACE, we obtain the following result. 
	
	\begin{theorem}
		\label{thm:ldlNEnonempt}
		\textsc{NE Non-Emptiness} of \RMGF can be solved in 2EXPTIME. 
	\end{theorem}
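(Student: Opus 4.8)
The plan is to turn \textsc{NE Non-Emptiness} into a single emptiness test on the automaton $\AName[\NE](\GName)$, and then carefully bound the cost of that test. By Theorem~\ref{thm:NEreg}, the automaton $\AName[\NE](\GName)$ recognises exactly the set of Nash runs of $\GName$, so $\NE(\GName) \neq \emptyset$ if and only if $\LName(\AName[\NE](\GName)) \neq \emptyset$. Hence it suffices to construct $\AName[\NE](\GName)$ and decide whether its language is empty; the running time of this procedure will be the sum of the construction time and the emptiness-checking time, so the whole argument reduces to bounding these two quantities.

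First I would assemble the size bound for $\AName[\NE](\GName)$ from the ingredients already noted above. For each player $j$, translating the goal $\gamma_j$ into an alternating Rabin word automaton is at most exponential in $\card{\gamma_j}$, and the synthesis construction of \cite{PR89} followed by the projection of \cite{NW98} produces a nondeterministic Rabin word automaton of size doubly exponential in $\card{\gamma_j}$. The remaining operations used to build $\AName[\NE](\GName)$ --- forming the products $\AName_{W}$ and $\AName[\overline{L}]$, intersecting them, and taking the union over $W \subseteq \AgnSet$ --- are polynomial in the sizes of their constituents, with the single caveat that the outer union ranges over $2^{\card{\AgnSet}}$ subsets. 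Since $2^{\card{\AgnSet}}$ is only singly exponential in $\card{\GName}$, it is absorbed into the doubly-exponential bound carried by each component, so overall $\AName[\NE](\GName)$ is a nondeterministic Rabin word automaton of size doubly exponential in $\card{\GName}$.

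Second, I would invoke the fact that emptiness of a nondeterministic Rabin automaton on words is decidable in NLOGSPACE, hence in time polynomial in the size of the automaton. Applied to $\AName[\NE](\GName)$, whose size is doubly exponential in $\card{\GName}$, this yields a running time polynomial in a doubly-exponential quantity, which is again doubly exponential in $\card{\GName}$. Since building the automaton is likewise doubly exponential in $\card{\GName}$, the total cost of constructing $\AName[\NE](\GName)$ and testing it for emptiness is doubly exponential in $\card{\GName}$, establishing the claimed 2EXPTIME bound.

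The main obstacle is not any individual step but the bookkeeping of the size blow-ups: one must confirm that the \emph{only} doubly-exponential source is the per-goal synthesis-and-projection pipeline, and that every subsequent operation --- in particular the union over all $W \subseteq \AgnSet$ and the products defining $\AName_{W}$ and $\AName[\overline{L}]$ --- contributes at most a further singly-exponential factor that does not push the bound beyond 2EXPTIME. Once this accounting is verified, the result follows immediately from Theorem~\ref{thm:NEreg} together with the NLOGSPACE emptiness test for Rabin word automata.
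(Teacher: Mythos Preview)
Your proposal is correct and follows essentially the same approach as the paper: reduce \textsc{NE Non-Emptiness} to emptiness of $\AName[\NE](\GName)$ via Theorem~\ref{thm:NEreg}, observe that the doubly-exponential blow-up comes from the per-goal synthesis construction of~\cite{PR89} while all remaining unions and intersections of Rabin automata are polynomial in their constituents, and conclude using the NLOGSPACE emptiness test for NRW. If anything, your accounting is slightly more careful than the paper's, since you explicitly note that the outer union over $2^{\card{\AgnSet}}$ subsets contributes only a singly-exponential factor and is therefore absorbed.
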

	
	Now, to show that \textsc{E-Nash} and \textsc{A-Nash} are in
	2EXPTIME, we can also apply an automata-theoretic approach.
	Indeed, for the \textsc{E-Nash} case, consider a game $\GName$ and an \LDLF
	formula $\varphi$.
	Then, the automaton $\AName[\varphi] \otimes \AName[\NE](\GName)$ recognizes
	all the plays that both satisfy $\varphi$ and are a Nash run.
	Thus, checking the \textsc{E-Nash} problem corresponds to checking the
	nonemptiness of such automaton.
	On the other hand, for the \textsc{A-Nash} problem, consider the automaton
	$\overline{\AName[\varphi]} \otimes \AName[\NE](\GName)$.
	This product automaton recognizes all plays that do not satisfy the formula 
	$\varphi$ and are a Nash run.
	Thus, checking the \textsc{A-Nash} problem corresponds to checking the 
	emptiness of such an automaton.
	The two constructions above show that both \textsc{E-Nash} and \textsc{A-Nash}
	can be solved in 2EXPTIME.
	Formally, combining the results above, we also obtain the following theorem: 
	
	\begin{theorem}
		\label{thm:ldlEnashAnash}
		\textsc{E-Nash} and \textsc{A-Nash} for \RMGF can be
		solved in 2EXPTIME. 
	\end{theorem}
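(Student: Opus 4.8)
The plan is to reduce each problem to a single (non)emptiness test on a product of $\AName[\NE](\GName)$ with an automaton built from the query formula $\varphi$, and then to bound the sizes so that the overall cost stays doubly exponential. By Theorem~\ref{thm:NEreg}, $\AName[\NE](\GName)$ is a nondeterministic Rabin word automaton whose language is exactly the set of Nash runs, and by the size analysis preceding Theorem~\ref{thm:ldlNEnonempt} it is of size doubly exponential in $\GName$. Since emptiness of an NRW is decidable in NLOGSPACE (hence in polynomial time) in the size of the automaton, every product we form will admit a (non)emptiness test costing at most polynomial in a doubly-exponential object, i.e.\ 2EXPTIME overall. What remains is to exhibit, for each problem, the right automaton for $\varphi$ and to confirm that its product with $\AName[\NE](\GName)$ preserves the doubly-exponential bound.

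For \textsc{E-Nash} I would first apply Theorem~\ref{thm:loop} to obtain the NBW $\AName[\varphi][\infty]$ with $\LName(\AName[\varphi][\infty]) = \set{\pi^{\infty} \in (\pow{\APSet})^{\omega}}{\pi^{\infty} \models \varphi}$; using the on-the-fly construction of~\cite{DV15} together with the self-loop modification described in Section~\ref{sec:iBGf}, this automaton has size single-exponential in $\card{\varphi}$. The product $\AName[\varphi][\infty] \otimes \AName[\NE](\GName)$ then recognizes precisely those plays that both satisfy $\varphi$ and are Nash runs, and remains doubly exponential in the size of the input since the NRW factor dominates. Hence \textsc{E-Nash} holds if and only if this product is non-empty, which can be decided in 2EXPTIME.

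For \textsc{A-Nash} the task is to recognize $\set{\pi^{\infty}}{\pi^{\infty} \not\models \varphi}$, and here lies the one genuine subtlety of the proof. As observed in Section~\ref{sec:iBGf}, over infinite plays $\pi^{\infty} \models \neg\varphi$ is \emph{not} equivalent to $\pi^{\infty} \not\models \varphi$ (witnessed by $\varphi = \Exs{\top^{*}} p$ on $(\bar p p)^{\omega}$), so I cannot simply build $\AName[\neg\varphi][\infty]$; the complement must be taken at the automaton level. I would exploit the special shape of $\LName(\AName[\varphi][\infty])$, which is the guarantee (co-safety) language consisting of all infinite extensions of prefixes accepted by the NFW $\AName[\varphi]$: determinizing $\AName[\varphi]$ by the subset construction yields a DFW of doubly-exponential size, and its complement---turned into a safety automaton that rejects as soon as an accepting prefix is seen---recognizes exactly $\set{\pi^{\infty}}{\pi^{\infty} \not\models \varphi}$ while staying doubly exponential. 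Denoting this automaton $\overline{\AName[\varphi]}$, the product $\overline{\AName[\varphi]} \otimes \AName[\NE](\GName)$ recognizes the plays that are Nash runs but fail $\varphi$, so \textsc{A-Nash} holds if and only if this product is empty; the emptiness test again costs 2EXPTIME.

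The main obstacle is precisely this \textsc{A-Nash} complementation step: because negation over finite traces does not dualise to non-satisfaction over infinite plays, one is forced to complement an $\omega$-automaton rather than to reuse the cheap syntactic negation of \LDLF. The payoff is that the co-safety structure of $\AName[\varphi][\infty]$ confines this complementation to a single-exponential blow-up of an already single-exponential automaton, so the doubly-exponential ceiling---and with it the 2EXPTIME bound---is preserved for both problems.
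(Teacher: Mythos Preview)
Your proposal is correct and follows essentially the same route as the paper: form the product of $\AName[\NE](\GName)$ with an automaton for $\varphi$ (resp.\ its complement) and test (non)emptiness. The paper's argument is in fact terser---it simply writes $\overline{\AName[\varphi]}\otimes\AName[\NE](\GName)$ for the \textsc{A-Nash} case without spelling out the complementation---whereas you correctly flag the semantic subtlety that $\pi^{\infty}\models\neg\varphi$ is not $\pi^{\infty}\not\models\varphi$ and give an explicit co-safety/safety complementation via determinisation; this extra care is sound and keeps you within the same 2EXPTIME bound.
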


\section{Extensions and Restrictions}
	\label{sec:IV}
	
	We now investigate on some extensions and restrictions on the problems studied in the previous section. 
	As a first result, we show that an extension of the \LDLF language used to represent players' goals can be used to encode \LDLF synthesis, studied in~\cite{DV15}, as a \textsc{NE Non-Emptiness} problem.
	Subsequently, we restrict to two classes of strategies, namely \emph{memoryless} and \emph{myopic} strategies.
	With respect to memoryless strategies, we show that our automata-based
	techniques can be used to show that the set of Nash runs for games of this 
	kind is also $\omega$-regular, as in the original problem.
	An EXPSPACE brute-force approach can be used to show that the induced automata 
	are suboptimal from a complexity point of view.%
	\footnote{Of course, with respect to \textsc{Expressiveness} this is an irrelevant feature of the automata construction.}
	However, the construction is still based on a simple extension of 
	automata on finite words, making it potentially useful in practice. 
	The case of myopic strategies, instead, is studied using a reduction
	to the satisfiability problem for the $1$-alternation fragment of \QPTL, 
	known to be solvable in EXPSPACE~\cite{SVW87}. 
	
	\begin{paragraph}{Games with Quantified prefix \LDLF Goals} 
		
		The results obtained so far show that checking whether a game has a Nash
		equilibrium can be solved in 2EXPTIME.
		We now show that an extension of the logic \LDLF, which we call \emph{quantified prefix \LDLF} (\QPLDLF) can also be solved using the same automata-theoretic technique, with the same complexity, and can be used to represent the \LDLF synthesis problem, which is 2EXPTIME-complete.
		Then, \textsc{NE Non-Emptiness} with respect to such an extension is 2EXPTIME-complete.
		
		Syntactically, a \QPLDLF formula $\varphi$ is obtained from an \LDLF formula $\psi$ by simply adding either an existential $\exists$ or a universal $\forall$ quantifier in front of it, i.e., $\varphi = \exists \psi$ or $\varphi = \forall \psi$.
		Such a quantification ranges over the set of prefixes of a given infinite path of valuations.
		Formally, we have that, for a given \QPLDLF formula of the form $\exists \psi$ and an infinite path $\pi^{\infty}$, we have that $\pi^{\infty} \models \exists \psi$ if there is $k \in \SetN$ such that $\pi^{\infty}_{ < k} \models \psi$.
		Analogously, for a \QPLDLF formula of the form $\forall \psi$, we have $\pi^{\infty} \models \forall \psi$ if $\pi^{\infty}_{ < k} \models \psi$, for all $k \in \SetN$.
		
		The reader might note that $\exists \psi$ is equivalent to $\psi$ on infinite
		plays.
		This means that the set of models for $\exists \psi$ corresponds to the set
		of infinite models of $\psi$ and so the automaton $\AName[\exists \psi] =
		\AName[\psi]$ recognizes the models of $\exists \psi$.
		Moreover, observe that, for every \LDLF formula $\psi$ and an infinite play
		$\pi^{\infty}$, we have that $\pi^{\infty} \models \forall \psi$ iff $\pi^{\infty}
		\not\models \exists \neg \psi$.
		This means that, in order to build the automaton $\AName[\forall \psi]$ for
		a formula of the form $\forall \psi$, one can first consider the formula $\neg \psi$
		and build the corresponding automaton $\AName[\exists \neg \psi]$.
		It follows that $\LName(\AName[\exists \neg \psi])$ is the set of infinite 
		plays that satisfy $\exists \neg \psi$, which is the complement of the set
		of plays satisfying $\forall \psi$.
		Thus, $\AName[\forall \psi] = \overline{\AName[\exists \neg \psi]}$.
		Using these constructions one can solve 
		\textsc{NE Non-Emptiness}, \textsc{E-Nash}, and \textsc{A-Nash} with \QPLDLF
		goals by applying the same automata-theoretic technique used for \LDLF.
		Then, we have the following result. 
		
		\begin{theorem}
			\label{thm:qpldl_upper}
			\textsc{NE Non-Emptiness}, \textsc{E-Nash}, and \textsc{A-Nash} with \QPLDLF goals, for both \iBG and \RMG can be solved in 2EXPTIME.
			Moreover, the sets of Nash equilibria for these classes of games is $\omega$-regular. 
		\end{theorem}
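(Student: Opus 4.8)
The plan is to show that every \QPLDLF goal can be fed into exactly the automata-building blocks already used in the proofs of Theorems \ref{thm:NEreg}, \ref{thm:ldlNEnonempt} and \ref{thm:ldlEnashAnash}, so that the whole construction of $\AName[\NE](\GName)$ carries over with no change in its shape. The key observation is that this construction only ever uses, for each goal $\gamma_i$, two ingredients: the automaton $\AName[\gamma_i]$ recognising the models of $\gamma_i$, and its complement $\overline{\AName[\gamma_i]}$ (the latter entering through the punishment automata $\overline{\AName[\gamma_j][\GName]''}$). Hence it suffices to supply both of these automata, within the same complexity bounds, for a goal that is now a \QPLDLF formula rather than a plain \LDLF formula.

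First I would treat the two syntactic shapes separately, using the equivalences already established earlier in the excerpt. For a goal of the form $\exists\psi$ we have $\AName[\exists\psi] = \AName[\psi]$, since $\exists\psi$ is equivalent to $\psi$ on infinite plays, and $\overline{\AName[\exists\psi]}$ is then the complement of the ordinary \LDLF model-automaton for $\psi$. For a goal of the form $\forall\psi$ I would use $\AName[\forall\psi] = \overline{\AName[\exists\neg\psi]} = \overline{\AName[\neg\psi]}$ and, dually, $\overline{\AName[\forall\psi]} = \AName[\neg\psi]$. Thus in both cases the two required objects are precisely the model-automaton of an underlying \LDLF formula ($\psi$ or $\neg\psi$) together with its complement, i.e.\ the same pair the \LDLF construction already produces. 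Plugging these into $\bigoplus_{W \subseteq \AgnSet}(\AName_{W} \otimes \AName[\overline{L}])$ yields an automaton that, by the identical argument of Theorem \ref{thm:NEreg}, is language preserving and recognises exactly the Nash runs, which immediately gives the $\omega$-regularity claim.

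For the complexity I would verify that complementation does not compound the blow-up. Each underlying \LDLF formula is first turned into an alternating (Rabin) word automaton of size linear in the formula, exactly as in \cite{DV15}; the complement needed for the $\forall$ case is then taken at the alternating level by dualisation, which is linear and stays inside the class of alternating Rabin word automata. Consequently the only doubly-exponential step remains the Pnueli--Rosner tree-automaton construction \cite{PR89} used to synthesise punishment strategies, followed by the Niwinski--Walukiewicz projection \cite{NW98}, both behaving exactly as in the \LDLF case. Therefore $\AName[\NE](\GName)$ is again a nondeterministic Rabin word automaton of doubly-exponential size, and since emptiness of such automata is in NLOGSPACE, \textsc{NE Non-Emptiness} with \QPLDLF goals is in 2EXPTIME. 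The \textsc{E-Nash} and \textsc{A-Nash} cases follow by intersecting $\AName[\NE](\GName)$ with $\AName[\varphi]$ or with $\overline{\AName[\varphi]}$ respectively, where the specification $\varphi$ is itself \QPLDLF and its (co-)model automaton is built by the same two-case recipe; the extra product is polynomial, so the bound is preserved. Finally, since \iBGF is the special case of \RMGF in which no module constraints are imposed, the same bounds hold for both models.

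The step I expect to be the main obstacle is precisely the complexity bookkeeping for the $\forall\psi$ goals. One must ensure that the complement required for $\forall\psi$ is formed while the automaton is still alternating, so that dualisation is free, rather than after any determinisation, and that the resulting object is genuinely in the alternating-Rabin format expected by the \cite{PR89} construction. If the complementation were deferred to the nondeterministic level it would cost an extra exponential and, compounded with the Pnueli--Rosner step, could threaten the 2EXPTIME bound; checking that the pipeline can be arranged so that each \QPLDLF goal contributes only a single dualisation on a linear-size alternating automaton is the crux of the argument.
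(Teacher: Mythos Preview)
Your proposal is correct and follows essentially the same approach as the paper: the paper's argument is simply that the constructions $\AName[\exists\psi] = \AName[\psi]$ and $\AName[\forall\psi] = \overline{\AName[\exists\neg\psi]}$ allow one to apply verbatim the automata-theoretic technique of Theorems~\ref{thm:NEreg}, \ref{thm:ldlNEnonempt} and \ref{thm:ldlEnashAnash}, which is exactly what you spell out. Your additional care about where the complementation for the $\forall$ case is taken (at the alternating level, via dualisation) is a detail the paper leaves implicit, but it is the right point to check and your resolution of it is correct.
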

		
		To obtain a matching lower bound, observe that, given the interpretation
		of \QPLDLF formulae, it is possible to encode the synthesis problem for
		\LDLF formulae as presented in~\cite{DV15}.
		Indeed, in such a case we only have to set a two-player game $\GName$ in
		which, say Player~1, controls the same variable as the system for the
		synthesis problem, and Player~2 controls the environment variables.
		At this point, by setting $\gamma_{1} = \exists \psi$ and $\gamma_{2} =
		\forall \neg \psi$, one ensures that Player~1 and Player~2 have exactly the
		same behaviours of system and environment in the synthesis problem,
		respectively.
		In addition to Player~1 and Player~2, to ensure a reduction to 
		\textsc{NE Non-Emptiness} one can add two players that
		trigger a ``matching pennies'' game in case $\psi$ is not
		synthesised.
		With this reduction it follows that \textsc{NE Non-Emptiness} is
		2EXPTIME-complete.
		
		Formally, consider an \LDLF formula $\varphi$ and the synthesis problem for
		it, in which the system controls a set of (output) variables $X$ while the
		environment controls a set of (input) variables $Y$.
		Then, consider the four-player \iBGF $\GName[\varphi]$ with \QPLDLF goals such that:
		
		\begin{itemize}
			\item
			Player 1 controls $X$ and has $\gamma_{1} = \exists \varphi$ as
			goal;
			
			\item
			Player 2 controls $Y$ and has $\gamma_{2} = \forall \neg \varphi$ as
			goal;
			
			\item
			Player 3 controls a fresh Boolean variable $p$ and has $\gamma_{3} = \exists
			\varphi \vee (p \leftrightarrow q)$ as goal; and
			
			\item
			Player 4 controls a fresh Boolean variable $q$ and has $\gamma_{4} = \exists
			\varphi \vee \neg (p \leftrightarrow q)$ as goal.
		\end{itemize}
		
		Using the above construction, we can show that the synthesis problem for an \LDLF formula $\varphi$ can be solved by addressing the \textsc{NE Non-Emptiness} problem for $\GName[\varphi]$, from which we derive the following theorem.
		
		\begin{theorem}
			\label{thr:qpldl_lower}
			\textsc{NE Non-Emptiness}, \textsc{E-Nash}, and \textsc{A-Nash} are 2EXPTIME-complete for both \iBGF and \RMGF.
		\end{theorem}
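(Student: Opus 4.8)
The plan is to prove the two bounds separately. The \emph{upper bound} is inherited essentially for free: Theorems~\ref{thm:ldlNEnonempt} and~\ref{thm:ldlEnashAnash} already place \textsc{NE Non-Emptiness}, \textsc{E-Nash}, and \textsc{A-Nash} for \RMGF in 2EXPTIME, and since every \iBGF is the special case of an \RMGF in which no module imposes constraints, the same membership holds for \iBGF; the \QPLDLF variants are covered by Theorem~\ref{thm:qpldl_upper}. Hence the real content of the theorem is the matching \emph{2EXPTIME lower bound}, which I would obtain by a polynomial reduction from \LDLF synthesis, shown to be 2EXPTIME-complete in~\cite{DV15}.

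For the reduction I would use exactly the four-player game $\GName[\varphi]$ built above, whose design splits into two interacting gadgets. The ``realizability'' gadget consists of Player~1 (controlling the output variables $X$, with goal $\gamma_1 = \exists\varphi$) and Player~2 (controlling the input variables $Y$, with goal $\gamma_2 = \forall\neg\varphi$), which together mirror the system and the environment of the synthesis game. The ``matching-pennies'' gadget consists of Players~3 and~4 (controlling fresh variables $p$ and $q$), whose goals $\exists\varphi \vee (p\leftrightarrow q)$ and $\exists\varphi \vee \neg(p\leftrightarrow q)$ are arranged so that this pair is pacified precisely when $\varphi$ is achieved on the play, and otherwise reduces to a pure matching-pennies interaction, which has no pure-strategy equilibrium. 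The target of the reduction is the equivalence $\NE(\GName[\varphi]) \neq \emptyset$ if and only if $\varphi$ is realizable.

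I would then verify the two directions. For the ``if'' direction, suppose $\varphi$ is realizable and let Player~1 fix a winning system strategy. Then $\varphi$ is achieved on every resulting play regardless of Player~2's choices (note $p,q$ do not occur in $\varphi$), so $\exists\varphi$ holds and Players~1,~3, and~4 are all satisfied; Player~2 is the only dissatisfied player, but since Player~1's strategy is winning no deviation of Player~2 can make $\varphi$ fail, so $\gamma_2 = \forall\neg\varphi$ is unreachable for it, and the profile is a Nash equilibrium. For the ``only if'' direction I would argue contrapositively: if $\varphi$ is not realizable, then by determinacy of the synthesis game the environment can falsify $\varphi$ against any system strategy, and I would show that no profile is an equilibrium by a case split on the outcome---either $\varphi$ is achieved, in which case Player~2 has a beneficial deviation realizing $\forall\neg\varphi$, or $\varphi$ is not achieved, in which case Players~3 and~4 sit in a pure matching-pennies subgame and at least one of them can beneficially deviate. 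Finally, \textsc{E-Nash} and \textsc{A-Nash} hardness follow by instantiating the queried formula: taking it to be $\top$ makes \textsc{E-Nash} coincide with \textsc{NE Non-Emptiness}, while taking it to be $\bot$ makes \textsc{A-Nash} its complement, and 2EXPTIME is closed under complement.

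The step I expect to be the main obstacle is making the environment \emph{genuinely adversarial} while staying inside the intended goal language. Player~2 must be satisfied exactly when $\varphi$ is \emph{not} achieved on the infinite play, i.e.\ when $\pi^\infty \not\models \varphi$; but as the example $\Exs{\top^*}p$ on $(\bar p p)^\omega$ shows, this non-achievement condition is \emph{not} the same as $\pi^\infty \models \neg\varphi$ and is provably not expressible as a plain \LDLF goal. This is precisely why $\gamma_2$ is phrased in \QPLDLF as $\forall\neg\varphi$, and it is the reason the reduction is naturally a game with \QPLDLF goals. To land the theorem for plain \iBGF and \RMGF, the delicate remaining work is to discharge this single complementary objective---for instance by pushing the complementation into the automata/punishment machinery of Theorem~\ref{thm:NEreg}, where the requirement that the coalition $\AgnSet_{-j}$ punish player~$j$ already encodes a winning strategy for the \emph{negation} of $\gamma_j$ via the complemented automaton $\overline{\AName[\gamma_{j}]}$, so that adversariality is supplied by the equilibrium structure rather than by an explicit \QPLDLF goal. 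I would expect reconciling the clean \QPLDLF reduction with the plain-\LDLF statement to be where the real care is needed.
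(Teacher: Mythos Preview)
Your proposal is correct and mirrors the paper's argument almost exactly: the paper proves the theorem by invoking the earlier 2EXPTIME upper bounds and then establishing the lower bound via Lemma~\ref{lmm:synred}, i.e., the very four-player game $\GName[\varphi]$ you describe, with the same ``if'' and ``only if'' directions (matching pennies forces $\exists\varphi$ on any equilibrium play, and then $\sigma_1$ must be winning since otherwise Player~2 deviates). The paper states that Theorem~\ref{thr:qpldl_lower} is a \emph{direct consequence} of that lemma and does nothing further.

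Your final paragraph, however, worries about something the paper does not attempt. Theorem~\ref{thr:qpldl_lower} lives inside the \QPLDLF paragraph, the game $\GName[\varphi]$ has \QPLDLF goals (in particular $\gamma_2 = \forall\neg\varphi$), and the paper never claims a 2EXPTIME lower bound for games with \emph{plain} \LDLF goals; indeed, Theorem~\ref{thm:stronNE} later restricts completeness explicitly to ``games with \QPLDLF goals''. So the phrase ``for both \iBGF and \RMGF'' in the statement should be read as ``for both the iBG and RMG frameworks, with \QPLDLF goals'', not as a separate hardness claim for pure \LDLF objectives. Your proposed extra step of discharging $\forall\neg\varphi$ inside the punishment automata is therefore unnecessary for matching the paper; you can simply drop that concern.
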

		
		In fact, Theorem~\ref{thr:qpldl_lower} is proved using the lemma given below. 
		
		\begin{lemma}
			\label{lmm:synred}
			The synthesis problem for an \LDLF formula $\varphi$ over a set of Boolean variables $X \cup Y$, where the system controls the variables in $X$ and the environment the variables in $Y$ has a positive answer if and only if the game $\GName[\varphi]$ has a Nash equilibrium.
			
		\end{lemma}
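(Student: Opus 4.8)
The plan is to prove both directions by tying Nash equilibria of $\GName[\varphi]$ to winning strategies in the two-player synthesis game between the system (Player~1, controlling $X$) and the environment (Player~2, controlling $Y$). First I would record three elementary facts that drive the whole argument. Since $p$ and $q$ are fresh and do not occur in $\varphi$, the truth of $\exists \varphi$ on any play depends only on its projection onto $X \cup Y$. Next, using the equivalence $\pi^{\infty} \models \forall \psi$ iff $\pi^{\infty} \not\models \exists \neg \psi$ recorded above, the goals of Players~1 and~2 are exactly complementary: $\gamma_2 = \forall \neg \varphi$ holds on a play precisely when $\gamma_1 = \exists \varphi$ fails on it. Finally, by the correspondence between \LDLF synthesis and two-player games over finite traces~\cite{DV15}, the synthesis instance is positive iff Player~1 has a strategy forcing $\exists \varphi$ against every $Y$-behaviour of Player~2; and since that game is routed through a deterministic automaton and is therefore a finite turn-based (hence determined) game, the instance is negative iff Player~2 has a strategy forcing $\forall \neg \varphi$ against every behaviour of Player~1.

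For the right-to-left direction, suppose the synthesis instance is positive and let $\sigma_1$ be a winning strategy for Player~1, lifted to the game by ignoring the values of $p$ and $q$; let Players~2, 3, 4 use arbitrary (say, constant) strategies, and call the resulting profile $\vec{\sigma}$. Because $\sigma_1$ forces $\exists \varphi$ irrespective of the $Y$-values chosen by Player~2, the generated play $\pi^{\infty}(\vec{\sigma})$ satisfies $\exists \varphi$, so $\gamma_1$, $\gamma_3$, and $\gamma_4$ all hold through their common disjunct $\exists \varphi$ while only $\gamma_2$ fails. I would then check that $\vec{\sigma}$ is a Nash equilibrium: Players~1, 3, 4 are already satisfied and have no incentive to deviate, while any unilateral deviation of Player~2 merely presents $\sigma_1$ with a different $Y$-behaviour, which it still defeats, so $\exists \varphi$ persists and $\gamma_2$ stays false. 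Hence $\vec{\sigma} \in \NE(\GName[\varphi])$.

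For the left-to-right direction I would argue the contrapositive: assuming the synthesis instance is negative, I show that no profile $\vec{\sigma}$ is a Nash equilibrium by exhibiting, for its play $\pi^{\infty} = \pi^{\infty}(\vec{\sigma})$, a beneficial unilateral deviation, splitting on whether $\pi^{\infty} \models \exists \varphi$. If it does, then $\gamma_2$ fails, and Player~2 can deviate to the environment's winning strategy (which exists by determinacy), forcing $\forall \neg \varphi$ against the fixed Player~1 and thereby satisfying $\gamma_2$; since $p, q$ do not affect $\varphi$, this deviation is beneficial. If instead $\pi^{\infty} \not\models \exists \varphi$, then on $\pi^{\infty}$ the goals $\gamma_3$ and $\gamma_4$ collapse to the matching-pennies conditions $(p \leftrightarrow q)$ and $\neg(p \leftrightarrow q)$ on the initial values of $p$ and $q$, so exactly one of Players~3, 4 is unsatisfied; that player deviates by resetting its own initial output to match (or mismatch) the other's. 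Because round-$0$ outputs depend on no history, the opposing gadget variable cannot adapt, so this toggles the truth of $p \leftrightarrow q$ at round~$0$ and satisfies the deviator's goal through the matching disjunct alone, independently of any induced change to the $X \cup Y$-projection. In both cases a profitable deviation exists, so $\NE(\GName[\varphi]) = \emptyset$.

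The delicate points I would treat carefully are two. First, the use of determinacy: I must ensure the game defining the synthesis instance is genuinely determined with the pure finite-state strategies available here, which is exactly why I route it through the deterministic-automaton game of~\cite{DV15} rather than reasoning about the concurrent four-player game directly, guaranteeing that a negative instance hands the environment a concrete winning strategy for $\forall \neg \varphi$. Second, the soundness of the matching-pennies deviation rests on the fact that the deviator's goal is met through the matching disjunct alone, so that whatever the other three players do at later rounds—and whatever this does to $\exists \varphi$—is immaterial. The main obstacle is therefore not any single computation but the bookkeeping needed to keep the two sub-games cleanly decoupled: the synthesis game on $X \cup Y$ destabilises the $\exists\varphi$-true plays (via Player~2), while the gadget on $p, q$ destabilises the $\exists\varphi$-false plays (via Player~3 or~4), so that a Nash equilibrium survives exactly when the system can realise $\varphi$.
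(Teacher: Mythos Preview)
Your proof is correct and follows essentially the same approach as the paper: both directions match, the only cosmetic difference being that you argue the converse by contrapositive (synthesis negative $\Rightarrow$ no Nash equilibrium) whereas the paper argues it directly (Nash equilibrium $\Rightarrow$ synthesis positive), with the same matching-pennies and Player~2-deviation ingredients. Your appeal to determinacy is harmless but unnecessary: to exhibit Player~2's beneficial deviation you only need a counter-strategy to the \emph{fixed} $\sigma_1$, and this exists directly from the negation of ``$\sigma_1$ is winning'' without invoking determinacy of the synthesis game.
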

		
		\begin{proof}
			
			We prove the lemma for \iBGF by double implication.
			From left to right, assume that the system has a winning strategy
			$\sigma_{1}$ against the environment in the synthesis problem.
			Then every strategy profile $\vec{\sigma}$ in $\GName[\varphi]$ in which
			Player 1 uses $\sigma_{1}$ is a Nash equilibrium.
			Indeed, as $\sigma_{1}$ is a winning strategy for the synthesis problem,
			we have that $\vec{\sigma} \models \varphi$ and so $\vec{\sigma} \models \exists
			\varphi$.
			Then, Players 1, 3, and 4 have their goals satisfied and,
			therefore, do not have an incentive to deviate.
			On the contrary, Player 2 does not have its goal $\gamma_{2}$ satisfied. 
			However, there is no beneficial deviation $\sigma_{2}'$ such that
			$(\vec{\sigma}_{-2}, \sigma_{2}') \models \gamma_{2}$, otherwise,
			$\sigma_{1}$ would not be a winning strategy in the synthesis problem.
			
			From right to left, assume there exists a strategy profile $\vec{\sigma}$ that
			is a Nash equilibrium.
			It is not hard to see that we have that $\vec{\sigma} \models \exists
			\varphi$, otherwise either Player 3 or Player 4 has a beneficial
			deviation.
			Moreover, the corresponding strategy $\sigma_{1}$ for Player 1 is winning
			for the system in the synthesis problem.
			Indeed, if by contradiction there exists a strategy $\sigma_{2}'$ for
			the environment in the synthesis problem, then we have that $(\vec{\sigma}_{-2}, \sigma_{2}') \models \gamma_{2}$, and so $\sigma_{2}'$ is a
			beneficial deviation for Player 2 in $\GName[\varphi]$, which 
			contradicts the fact that $\vec{\sigma}$ is a Nash equilibrium.
			
			For the \RMGF case, we can just regard the \iBGF $\GName[\varphi]$ as a \RMGF in which every agent $i$ is associated with $\card{\Phi_{i}}$ modules, each of them allowing a single variable $x_i \in \Phi_{i}$ to be freely set at every iteration of the game execution.
			Thus, we obtain the assert. 
		\end{proof}
		
		Theorem~\ref{thr:qpldl_lower} is a direct consequence of Lemma~\ref{lmm:synred}.
		
	\end{paragraph}
	
	\begin{paragraph}{Games with Memoryless Strategies}
		\label{secn:memoryless}
		
		In this subsection, we study games with memoryless strategies.
		We say that a strategy $\sigma_{i} = (S_{i}, s_{i}^{0}, \delta_{i},
		\tau_{i})$ for Player $i$ is \emph{memoryless} if $S_{i} = \pow{\APSet}$
		and $\delta_{i}$ is deterministic. 
		Intuitively, a strategy is memoryless if, for each state of the
		game, it always chooses the same action at such state. 
		Moreover, a play $\pi^{\infty} \in (\pow{\APSet})^{\omega}$ is said
		to be memoryless if, for all $v, w \in \pow{\APSet}$, if
		$\pi^{\infty}_{k} = v$ and $\pi^{\infty}_{k + 1} = w$, for some $k \in
		\SetN$, then, for all $h \in \SetN$, if $\pi^{\infty}_{h} = v$ then
		$\pi^{\infty}_{h + 1} = w$.
		A profile $\vec{\sigma}$ made by memoryless strategies can
		only generate memoryless plays and vice-versa.
		
		Moreover, it is not hard to build a polynomial size NBW $\AName[mless]$ accepting all and only the memoryless plays.
		This turns out to be useful in addressing the case of memoryless strategies.	
		Indeed, to solve the \textsc{NE Non-Emptiness} problem with memoryless strategies, 
		we only need to adjust the general procedure by pairing the automaton $\AName[mless]$ to every single component of the automaton $\AName[NE](\GName)$.
		This operation then adds the memoryless requirement to the goal of
		a player and to the punishment strategies.
		
		Now, although this solution technique allows one to prove that the set of Nash Equilibria in memoryless games is $\omega$-regular, this is not optimal from a computational complexity point of view, which is still 2EXPTIME.
		For instance, a brute-force procedure can solve the problem in EXPSPACE.
		Indeed, given the definition of strategies, we know that a memoryless strategy for a player in the game has (at most) $2^\Phi$ states.
		Then, a memoryless strategy, as well as a strategy profile, can be guessed
		in time exponential in the size of $\Phi$ and saved using exponential space. 
		In addition, using \textsc{NE Membership} we can check in PSPACE whether such a strategy profile is a Nash equilibrium of the game.
		Moreover, for some \RMGF the problem can be solved even with a better complexity.
		Indeed, if the overall number of guards in the modules is polynomial w.r.t the number of variables, we only have a polynomial number of states, and thus memoryless strategies can be polynomially represented, leading to a PSPACE complexity for the \textsc{NE Non-Emptiness} problem.
		Formally, we have:

		\begin{theorem}
			The sets of memoryless Nash equilibria for \iBGF and \RMGF with both \LDLF and \QPLDLF are $\omega$-regular. 
			Moreover, the \textsc{NE Non-Emptiness} problem for \iBGF and \RMGF with both \LDLF and \QPLDLF with memoryless strategies can be solved in EXPSPACE.
			For the case of \RMGF with an overall number of guards that is polynomial w.r.t. the number of variables, the problem can be solved in PSPACE.
		\end{theorem}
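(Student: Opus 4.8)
The plan is to prove the three claims in turn, treating the \QPLDLF case as a drop-in replacement for \LDLF throughout: wherever a construction calls for $\AName[\gamma]$ or $\AName[\gamma][\infty]$ I substitute the \QPLDLF automata $\AName[\exists \psi] = \AName[\psi]$ and $\AName[\forall \psi] = \overline{\AName[\exists \neg \psi]}$ built earlier, which have the same type and size profile, so nothing downstream changes. For the $\omega$-regularity claim I would follow the route sketched just before the theorem: first build the polynomial-size NBW $\AName[mless]$ accepting exactly the memoryless plays, then re-run the construction of $\AName[\NE](\GName)$ from Theorem~\ref{thm:NEreg} with $\AName[mless]$ paired into each component, forming $\AName[\NE]^{mless}(\GName) = \bigoplus_{W \subseteq \AgnSet}(\AName_{W} \otimes \AName[\overline{L}] \otimes \AName[mless])$. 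Since this is again a finite Rabin automaton on words, its language is $\omega$-regular.

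The correctness argument mirrors Theorem~\ref{thm:NEreg} but uses the memoryless-restricted characterisation: $\vec{\sigma}$ is a memoryless Nash equilibrium iff there is a set $W$ such that the memoryless play $\pi^{\infty}(\vec{\sigma})$ satisfies $\gamma_{i}$ for every $i \in W$, while for each $j \in L = \AgnSet \setminus W$ the memoryless coalition $\vec{\sigma}_{-j}$ punishes $j$ against every \emph{memoryless} deviation. The bridge between plays and strategies is the correspondence, recorded before the theorem, that memoryless profiles and memoryless plays coincide; this lets me reuse the left-to-right synthesis of Theorem~\ref{thm:NEreg} with all component machines $\Delta_{\pi^{\infty}}$ and $\Delta_{j}$ taken lasso-shaped over $\pow{\APSet}$, hence memoryless. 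The $W$-side is immediate: intersecting each $\AName[\gamma_{i}][\GName]$ with $\AName[mless]$ exactly says that the play is memoryless and satisfies $\gamma_{i}$.

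I expect the genuine obstacle to be the punishment component. Non-satisfaction of $\gamma_{j}$ on an infinite play is a \emph{safety} condition (the NBW $\AName[\gamma_{j}][\infty]$ must never reach its accepting sink), so the coalition $\AgnSet_{-j}$ plays a safety game; the catch is that safety games are memoryless-determined only over the \emph{product} state space (current valuation together with the $\AName[\gamma_{j}]$ state), whereas our notion of memorylessness fixes a coalition action as a function of the valuation alone. Consequently, merely intersecting the projected punishment automaton $\overline{\AName[\gamma_{j}][\GName]''}$ with $\AName[mless]$ guarantees only that the recognised plays are memoryless, not that they are enforceable by a valuation-memoryless coalition, so that step over-approximates. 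The fix I would adopt is to impose memorylessness one level earlier, inside the tree (Pnueli--Rosner) step: restrict the nondeterministic Rabin tree automaton so that it accepts only those strategy-trees in which any two nodes carrying the same valuation prescribe the same coalition action, thereby synthesising valuation-memoryless punishment strategies directly, before projecting to the word automaton. (As a safety net for the $\omega$-regularity claim by itself, one may bypass automata entirely: there are only finitely many memoryless profiles, hence finitely many memoryless plays, so the memoryless Nash runs form a finite set of ultimately periodic words and are trivially $\omega$-regular.)

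For the complexity bounds I would use brute force. A memoryless strategy for player $i$ is a function $\pow{\APSet} \to \pow{\APSet_{i}}$, so a profile has a representation of size exponential in $\card{\APSet}$; I nondeterministically guess such a profile in exponential space, using Savitch to collapse NEXPSPACE to EXPSPACE. To verify that the guess is a memoryless Nash equilibrium I run the membership procedure of Theorem~\ref{thm:ne-membershipiBG}, modified so that each beneficial-deviation test is intersected with $\AName[mless]$, i.e.\ for each $j$ testing nonemptiness of the product of $\AName(\vec{\sigma}_{-j})$, $\AName[\GName]$, $\AName[mless]$ and $\AName[\gamma_{j}][\infty]$; every product state has a polynomial-size description, so these tests run in PSPACE relative to those descriptions and the procedure is dominated by the exponential storage of the profile, giving EXPSPACE. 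Finally, for a \RMGF whose total number of guarded commands is polynomial in $\card{\APSet}$, there are only polynomially many reachable configurations, so a memoryless strategy (recording, per configuration, the command each module fires) and hence a whole profile have polynomial size; both the guess and the membership check then fit in polynomial space, yielding the PSPACE bound.
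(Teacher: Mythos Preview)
Your proposal follows the paper's approach in all three parts: the paper, in the discussion preceding the theorem, also (i) pairs the automaton $\AName[mless]$ with every component of $\AName[\NE](\GName)$ to obtain $\omega$-regularity, (ii) guesses an exponential-size memoryless profile and invokes \textsc{NE Membership} for the EXPSPACE bound, and (iii) observes that polynomially many guards yield polynomial-size memoryless profiles for the PSPACE bound. The paper gives no formal proof beyond this sketch.

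Where you go further is in scrutinising the punishment side of the $\omega$-regularity argument. The paper simply asserts that intersecting with $\AName[mless]$ ``adds the memoryless requirement \ldots\ to the punishment strategies'', without addressing the distinction you raise between a memoryless \emph{play} and a valuation-memoryless \emph{coalition strategy}. Your observation that the projected word automaton $\overline{\AName[\gamma_{j}][\GName]''}\otimes\AName[mless]$ may over-approximate is well taken, and your proposed remedy (enforcing memorylessness at the tree-automaton level before projection) is the natural fix. Your fallback argument---finitely many memoryless profiles, hence finitely many ultimately periodic Nash runs, hence trivially $\omega$-regular---is correct and in fact the cleanest route to the regularity claim; the paper does not mention it. Similarly, your explicit insertion of $\AName[mless]$ into the deviation test of the membership check is a detail the paper leaves implicit but which is needed to restrict deviations to memoryless ones. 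In short, your line is the paper's line, executed with more care than the paper itself provides.
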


	\end{paragraph}

	\begin{paragraph}{Games with Myopic Strategies}
		\label{secn:myopic}
		
		Another important game-theoretic setting is the one given by 
		{\em myopic strategies} 
		as they can be used to define {\em all} beneficial deviations. 
		A game with myopic strategies is called a {\em myopic \iBGF}.
		We say that a strategy $\sigma_{i} = (S_{i}, s_{i}^{0}, \delta_{i},
		\tau_{i})$ for Player $i$ is myopic if its transition function does not
		depend on the input variables, i.e., such that for each $s \in S_{i}$
		and $v, v' \in \pow{\APSet}$, we have $\delta_{i}(s,
		v) = \delta_{i}(s, v')$.
		In a myopic \iBGF, players are only allowed to use myopic strategies.
		In~\cite{GPW16} it is shown how to %polynomially 
		reduce
		\textsc{NE Non-Emptiness} for myopic iBG to the satisfiability of the
		\QPTL formula 
		$$
		\varphi = \bigvee_{W \subseteq \AgnSet}  (\exists \Phi_1, \ldots,
		\Phi_n . (\bigwedge_{i \in W} \gamma_i \wedge \bigwedge_{j \in
			\AgnSet \setminus W} (\forall\Phi_j . \neg\gamma_j)))
		$$
		where the formulae $\gamma_{i}$ are the \LTL goals of the players in the myopic iBG instance and the quantifier alternation is $1$ (an alternation fragment for which the complexity is known to be EXPSPACE~\cite{SVW87}).
		
		To apply the solution provided in~\cite{SVW87} to check the satisfiability of $\varphi$, one first has to transform each $\gamma_{i}$ into the NBW automata recognizing their models. 
		In the case of \iBGF, these \LTL formulae are replaced by \LDLF formulae.
		However, as shown in the previous section, the infinite models of an \LDLF formula $\gamma_{i}$ can also be recognized by NBW automata that are equivalent to some $\omega$-regular expression of the form $\alpha \cdot (\pow{\APSet})^{\omega}$.
		Thus, in order to solve \textsc{NE Non-Emptiness} for myopic \iBGF, we can first transform every \LDLF goal $\gamma_{i}$ into the corresponding NBW $\AName[\gamma_{i}]$ and then follow the technique used in~\cite{SVW87}.
		Note that the same reasoning applies also for the case of \QPLDLF goals.

		Moreover, for the case of \RMGF, we just have to replace the automaton $\AName_{\gamma_{i}}$ for $\gamma_{i}$ with the automaton $\AName_{\gamma_{i}} \otimes \AName_{\GName}$, to force every player to comply with the modules specification.

		We then obtain the following result for games with myopic strategies: 
		
		\begin{theorem}
			The \textsc{NE Non-Emptiness} problem for myopic \iBGF and \RMGF with \LDLF or \QPLDLF goals can be solved in EXPSPACE.
		\end{theorem}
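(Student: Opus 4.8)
The plan is to reduce \textsc{NE Non-Emptiness} with myopic strategies to the satisfiability problem for the $1$-alternation fragment of \QPTL and then invoke the EXPSPACE decision procedure of~\cite{SVW87}, exactly as done for \LTL goals in~\cite{GPW16}. The key enabling observation is that a myopic strategy for player $i$ is just a fixed infinite sequence of valuations of $\APSet_{i}$, since its transition function ignores the input; hence quantification over myopic strategies coincides with propositional quantification over infinite traces. Under this identification the characterisation of Nash equilibria from~\cite{GHW15a} translates verbatim into the displayed formula: the outer block $\exists \APSet_{1}, \ldots, \APSet_{n}$ guesses the myopic profile generating the equilibrium play, the conjuncts $\bigwedge_{i \in W} \gamma_{i}$ force every winner's goal, and each inner block $\forall \APSet_{j}\,\neg\gamma_{j}$ asserts that the punished player $j \in L$ has no myopic deviation achieving $\gamma_{j}$. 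This is a genuine $1$-alternation formula, so~\cite{SVW87} applies once we supply automata for the goals.

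First I would replace the \LTL goals by \LDLF goals inside the reduction. The goals enter the procedure of~\cite{SVW87} only through the B\"uchi automata recognising their infinite models, so it suffices to plug in, for each $\gamma_{i}$, the NBW $\AName[\gamma_{i}][\infty]$ of Theorem~\ref{thm:loop}: by that theorem $\LName(\AName[\gamma_{i}][\infty])$ is exactly $\set{\pi^{\infty}}{\pi^{\infty} \models \gamma_{i}}$, the automaton has the special form $\alpha \cdot (\pow{\APSet})^{\omega}$, and its size is exponential in $\card{\gamma_{i}}$ --- the same order as the NBW obtained from an \LTL goal. For \QPLDLF goals I would use the two constructions recalled earlier in this section, $\AName[\exists \psi] = \AName[\psi]$ and $\AName[\forall \psi] = \overline{\AName[\exists \neg \psi]}$. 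With these automata in hand the \QPTL machinery runs unchanged: existential quantifiers are discharged by projection (free on nondeterministic automata) and the single universal block by one complementation followed by projection.

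For the \RMGF case the only additional ingredient is to force every player to respect its module. Following the pattern used throughout the paper, I would replace each goal automaton $\AName[\gamma_{i}]$ by the product $\AName[\gamma_{i}] \otimes \AName_{\GName}$, where $\AName_{\GName}$ is the automaton of Lemma~\ref{lmm:RMGautomaton} accepting exactly the legal executions of $\GName$. This restricts both the witnessing play and every punishment argument to module-compatible plays; since $\AName_{\GName}$ has state set $\pow{\APSet} \cup \{\epsilon\}$, the product contributes only one further exponential factor. For the complexity bound I would argue that the goal automata are at most doubly exponential in the game, that the formula carries a single quantifier alternation so that complementation is invoked only once per universal block, and that the product of the linearly many resulting automata therefore remains doubly exponential; checking emptiness of such an automaton on the fly takes space logarithmic in its size, i.e.\ exponential in the game, which is EXPSPACE. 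The outer disjunction over $W \subseteq \AgnSet$ adds only exponentially many disjuncts, which can be enumerated within the same space bound.

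The step I expect to be the main obstacle is this last complexity accounting, specifically keeping every quantifier block --- including those produced by \QPLDLF goals of the form $\forall \psi$, whose automaton $\overline{\AName[\exists \neg \psi]}$ is already doubly exponential --- within a single complementation of an exponential-size automaton, so that the final automaton stays doubly exponential and its emptiness is decidable in exponential space rather than higher. The way to do this is to exploit the reachability/safety (open/closed) structure of the \LDLF-derived automata: a punished goal in position $\forall \APSet_{j}\,\neg\gamma_{j}$ should be realised through the dual safety automaton and an existential projection, so that only one complementation is ever stacked on an exponential-size object. Verifying that this reordering is always available, and that the extra product with $\AName_{\GName}$ in the \RMGF case combines multiplicatively and hence harmlessly, is the delicate part of the argument.
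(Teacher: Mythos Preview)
Your proposal is correct and follows essentially the same route as the paper: reduce to the $1$-alternation fragment of \QPTL via the formula from~\cite{GPW16}, plug in the NBW $\AName[\gamma_{i}][\infty]$ (respectively the \QPLDLF automata $\AName[\exists\psi]$, $\AName[\forall\psi]$) in place of the \LTL automata, take the product with $\AName_{\GName}$ for the \RMGF case, and invoke~\cite{SVW87}. The paper's own argument is considerably terser than yours --- it simply asserts that ``the same reasoning applies'' for \QPLDLF goals and does not work through the complexity accounting that you correctly flag as the delicate point.
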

		At this point it is important to note that a key observation behind this result is the fact that when playing with myopic strategies the strategies that are used to construct a run that is sustained by a Nash equilibrium (a Nash run) must be oblivious to players' deviations.
		
	\end{paragraph}

	\begin{paragraph}{Games with Strong Nash equilibria}
		\label{secn:strongne}
		
		Despite being the most used solution concept in non-cooperative game
		theory~\cite{OR94}, Nash equilibrium still has some limitations, for instance,
		it is not always stable and also it includes non desirable
		equilibria.
		As an example, consider a two-player game in which Player 1 controls a
		variable $p$ and has the \LDLF goal $\gamma_{1} = q$, while Player
		2 controls a variable $q$ and has the  \LDLF goal $\gamma_{2} =
		p$%
		\footnote{Observe that $\gamma_{1}$ and $\gamma_{2}$ are
			propositional logic formulae, {\em i.e.}, special cases of \LDLF.}.
		It is clear that every strategy profile $\vec{\sigma}$ is a Nash
		equilibrium.
		Indeed, even in case a goal $\gamma_{i}$ is not satisfied, the corresponding
		player cannot deviate from it, as the satisfaction of each player's goal is
		fully controlled by the other one.
		However, the desired outcome for both players is to satisfy both goals.
		Then, if we allow the two players to \emph{collaboratively} deviate, the only
		stable outcomes are the ones making true both $p$ and $q$ at the first
		round of the computation. 
		
		A strong Nash equilibrium considers not only a single player's deviation, but
		also every possible coalition of players having a collective deviation
		incentive.
		Formally, for a given strategy profile $\vec{\sigma}$, we say that it is a
		\emph{strong Nash equilibrium} if there is no subset $C 
		\subseteq \AgnSet$ and partial strategy profile $\vec{\sigma}_{C}'$ such
		that, for all $i \in C$, $\pi^{\infty}(\vec{\sigma}_{-C},
		\vec{\sigma}_{C}') \succ_{i} \pi^{\infty}(\vec{\sigma})$.
		Then, in a strong Nash equilibrium a coalition of players~$C$ has an
		incentive to deviate if and only if every player $i$ in such a
		coalition has an incentive to deviate.
		By $\sNE(\GName)$ we denote the set of strong Nash equilibria in $\GName$.
		To check whether there exists a strong Nash equilibrium in a game
		using an automata-theoretic approach, we need to be able to express
		this notion of beneficial collective deviation with an appropriate
		automaton.
		
		To do this, we just need to adjust the automaton
		$\overline{\AName[\gamma_{j}][\GName]''}$ given in the previous section,
		used to recognize all the plays that can be generated by a punishment
		strategy of the coalition $\AgnSet[ - j]$ against $j$, having goal
		$\gamma_{j}$.
		Indeed, the concept of punishment can be easily lifted to punishing a group of
		players.
		To do this, for a set $C \subseteq \AgnSet$, consider the automaton
		$\AName_{C} = \bigotimes_{j \in C} \AName[\gamma_{j}][\GName]$ recognizing
		all the models that satisfy every $\gamma_{j}$, for $j \in C$.
		Then, as in the previous section, we can build the automaton
		$\overline{\AName_{C}''}$ that recognizes the plays generated by a
		punishment strategy for the coalition $\AgnSet \setminus C$ against the
		goal being the conjunction of goals of coalition $C$.
		At this point, as in the case of Nash equilibrium, let us fix a set of
		``winners'' $W \subseteq \AgnSet$ in the game and then consider the product automaton
		$\AName[\overline{\pow{L}}] = \bigotimes_{C \in \pow{L}}
		\overline{\AName_{C}''}$.
		By the semantics of the product operation we obtain that the automaton 
		$\AName[\overline{\pow{L}}]$ accepts those paths that are generated by
		some punishment profile, for each coalition of players $C \in \pow{L}$.
		
		Thus, we have that the product automaton $\AName_{W} \otimes
		\AName[\overline{\pow{L}}]$ accepts exactly those paths for which every
		$\gamma_{i}$, with $i \in W$, is satisfied while, for each coalition
		$C \in \pow{L}$ the coalition $\AgnSet[ - C]$ is using a
		punishment strategy against $C$.
		Now, as for Nash equilibria, we need to quantify over $W \subseteq \AgnSet$
		to obtain an automata characterisation: 
		
		$$
		\AName[\sNE](\GName) = \bigoplus_{W \subseteq \AgnSet} \AName_{W}
		\otimes \AName[\overline{\pow{L}}].
		$$
		
		The proof of correctness of this construction and its complexity is 
		as for Theorem~\ref{thm:ldlNEnonempt}.
		Moreover, the same result can be obtained also with \QPLDLF objectives, as well as \RMGF games.
		Also, observe that the reduction from \LDLF synthesis provided for
		\textsc{NE Non-Emptiness} with \QPLDLF goals can be reused with the same 
		construction for the case of \textsc{sNE Non-Emptiness}. 
		Formally, we have the following result. 
		
		\begin{theorem}
			\label{thm:stronNE}
			\textsc{sNE Non-Emptiness} is in 2EXPTIME for both \iBGF and \RMGF, by using either \LDLF or \QPLDLF goals.
			In particular, for games with \QPLDLF goals, the problem is 2EXPTIME-complete. 
			In addition, the set of strong Nash equilibria for games with either kind of goals is $\omega$-regular. 
		\end{theorem}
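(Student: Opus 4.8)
The plan is to obtain the three assertions by transporting, respectively, the language-preservation argument of Theorem~\ref{thm:NEreg}, the complexity count of Theorem~\ref{thm:ldlNEnonempt}, and the synthesis reduction of Lemma~\ref{lmm:synred} to the coalitional setting. Since the automaton $\AName[\sNE](\GName)$ has already been assembled above, what remains is (i) to show that it is language preserving---that it accepts exactly the plays sustained by some strong Nash equilibrium---and (ii) to bound its size; the $\omega$-regularity claim is then immediate from (i), and the upper bounds for \iBGF and for \QPLDLF goals follow by the same remarks as in the Nash case.

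For step (i) I would first record the coalitional analogue of the punishment characterisation of~\cite{GHW15a}: a profile $\vec{\sigma}$ is a strong Nash equilibrium if and only if there is a set $W \subseteq \AgnSet$ with $\pi^{\infty}(\vec{\sigma}) \models \gamma_{i}$ for every $i \in W$ such that, writing $L = \AgnSet \setminus W$, for every nonempty coalition $C \subseteq L$ the partial profile $\vec{\sigma}_{-C}$ is a punishment strategy for $\AgnSet \setminus C$ against the joint goal $\bigwedge_{j \in C} \gamma_{j}$. The observation that makes this equivalence correct is that a beneficially deviating coalition can contain only \emph{losers}: since $\pi^{\infty} \succ_{i} {\pi^{\infty}}'$ forces $\pi^{\infty} \models \gamma_{i}$ and ${\pi^{\infty}}' \not\models \gamma_{i}$, any player already satisfying its goal cannot strictly improve, so it suffices to forbid joint deviations by subsets of $L$, and such a deviation by $C$ exists exactly when the complement fails to punish $\bigwedge_{j \in C}\gamma_{j}$. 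Granting this characterisation, the double implication proceeds as in Theorem~\ref{thm:NEreg}: from an accepting run of $\AName_{W} \otimes \AName[\overline{\pow{L}}]$ one extracts an ultimately periodic play-generating machine $\Delta_{\pi^{\infty}}$ together with, for each $C \in \pow{L}$, a finite-state punishment machine witnessing membership in $\LName(\overline{\AName_{C}''})$, and assembles a profile---whose internal flag now ranges over $\pow{L}$ rather than over single players---that follows $\Delta_{\pi^{\infty}}$ while no deviation is detected and switches to the machine punishing $C$ as soon as exactly the players of a coalition $C \subseteq L$ are seen to deviate; conversely a strong Nash run yields the set $W$ of its satisfied players and, for each $C$, the punishment tree guaranteed by the characterisation, placing it in the corresponding union component.

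For step (ii) I would repeat the size tally of Theorem~\ref{thm:ldlNEnonempt}. Each $\overline{\AName_{C}''}$ is built via~\cite{PR89} and~\cite{NW98} from $\bigwedge_{j \in C}\gamma_{j}$, whose length is linear in the total goal size, hence is doubly exponential. The only point that warrants care is that $\AName[\overline{\pow{L}}]$ is a product of \emph{exponentially many} such factors; but a product of at most $2^{n}$ automata, each with a state space of size at most $2^{2^{\mathrm{poly}}}$, has a state space of size at most $\bigl(2^{2^{\mathrm{poly}}}\bigr)^{2^{n}} = 2^{2^{\mathrm{poly}}\cdot 2^{n}}$, and since $2^{\mathrm{poly}}\cdot 2^{n} = 2^{\mathrm{poly}+n}$ is again singly exponential in the input, the product remains doubly exponential; the outer union over $W$ adds only a factor $2^{n}$. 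Thus $\AName[\sNE](\GName)$ is a nondeterministic Rabin word automaton of doubly exponential size, and as NRW emptiness is in NLOGSPACE we obtain the 2EXPTIME bound. The construction is insensitive to whether the goals are \LDLF or \QPLDLF and to whether the game is an \iBGF or an \RMGF, the latter contributing only the fixed factor $\AName[\GName]$ inside each $\AName[\gamma_{j}][\GName]$.

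Finally, for 2EXPTIME-completeness with \QPLDLF goals I would reuse the game $\GName[\varphi]$ of Lemma~\ref{lmm:synred} verbatim and argue that it admits a strong Nash equilibrium precisely when $\varphi$ is realisable. One direction is free, as every strong Nash equilibrium is in particular a Nash equilibrium and Lemma~\ref{lmm:synred} then yields realisability. For the converse, if the system wins then any profile in which Player~1 plays its winning strategy satisfies $\exists \varphi$, so Players~1,~3 and~4 are satisfied and Player~2 is the unique loser; by the loser-only observation the only candidate deviating coalition is $\{2\}$, which---exactly as in Lemma~\ref{lmm:synred}---has no beneficial deviation, so the profile is even a strong Nash equilibrium. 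Hence \textsc{sNE Non-Emptiness} inherits 2EXPTIME-hardness for \iBGF (and thus for \RMGF), matching the upper bound. I expect the main obstacle to be step (i), and specifically the justification of the loser-only restriction and the resulting coalitional punishment characterisation; once that is in place, everything else is careful bookkeeping over the constructions already established for Nash equilibria.
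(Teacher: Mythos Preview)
Your proposal is correct and follows essentially the same approach as the paper: the paper defers the correctness and complexity arguments to Theorem~\ref{thm:ldlNEnonempt}, reuses the synthesis reduction of Lemma~\ref{lmm:synred} verbatim for hardness, and addresses the exponential-factor issue in the subsequent Remark, all of which you reproduce. If anything, you are more careful than the paper in making explicit the loser-only justification underpinning the restriction of the product $\AName[\overline{\pow{L}}]$ to coalitions $C \subseteq L$.
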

	
		\begin{remark}
			The reader might notice that the automata product in the definition of $\AName[sNE](\GName)$ contains a number of factors that is exponential in the number of agents.
			However, this blow-up does not affect the complexity of its emptiness problem.
			Indeed, every single factor is already of size double exponential.
			Therefore, a multiplication of an exponential number of doubly exponential sized automata is still of size doubly exponential.
		\end{remark}

	\end{paragraph}

\section{Concluding Remarks}
	\label{secn:final}
	
	\paragraph*{Logic-based Multi-player Games Revisited}
	In the introduction section it was pointed out that the RMG/iBG and \iBGF frameworks rely on different automata techniques, and that \RMGF/\iBGF is better suited in certain scenarios. However, it is not the case that the \RMGF/\iBGF frameworks generalise iBGs. 
	Indeed, it should be noted that they are incomparable models. For instance, while iBG cannot be used to reason about games with goals over finite traces, \RMGF/\iBGF cannot be used to reason about games with goals over infinite traces {\em only}, that is, regardless of the satisfaction of players' goals in the associated finite traces; from a logical point of view, while RMG/iBG considers \LTL, \RMGF/\iBGF can handle goals in \LDLF, which on finite traces is strictly more expressive than \LTLF~\cite{DV13}, and also than LTL over finite traces. 
	However, as shown here using new automata techniques, the complexities of some problems in each game model coincide in the worst case for many variants of these different kinds of games.
	
	\paragraph*{Automata for Linear Dynamic Logic}
	Logic, games, and automata are intimately related; see, {\em e.g.}, \cite{BozzelliMP15,FijalkowPS13} and references therein for examples. 
	We build upon the automata constructions for Linear Dynamic Logic (\LDLF
	\cite{DV13,DV15}), which were introduced to solve the satisfiability and
	synthesis problems for \LDLF over finite traces. Specifically, 
	we have initially used such
	constructions to translate \LDLF formulae to alternating automata on finite
	words (AFW) and, based on them, we have defined {\em new and optimal automata
		constructions} that characterise the existence of (strong) Nash equilibria on
	top of the standard Boolean games framework.
	
	However, even though most, but not all, of the automata
	constructions we have presented in this paper are optimal, 
	they still enjoy two useful properties.
	Firstly, that they are strongly based on automata on finite words, with only
	an extension to deal with infinite runs, a feature that could be used
	a lot further.
	Secondly, that such automata constructions recognise the sets of Nash runs,
	which, as shown in this paper, makes them extremely useful from a semantic point of
	view. Indeed, using other automata approaches, {\em e.g.}, for iBGs, our 
	expressiveness results do not easily follow. 
		
	In addition, the automata constructions defined in this paper 
	can be modified to reason about other game settings, making it a rather
	widely adaptable reasoning technique. 
	For instance, we believe it is also possible to extend {\em some} of the results we have obtained 
	to two-player games with imperfect information. This should be possible, 
	in some cases, using recent automata 
	constructions to reason about \LDLF formulae under partial observation~\cite{DV16}.
		
	\paragraph*{Imperfect Information}
	Following the research line delineated in~\cite{DV13,DV15,DV16}, one might wonder about the complexity of solving \RMGF/\iBGF in the context of imperfect information.
	It is important to notice that the synthesis problems for \LTL for both perfect and imperfect information is decidable~\cite{PR89,KV00}.
	On the other hand, the \textsc{NE-Nonemptiness} problems for games having \LTL goals is decidable for the perfect information case~\cite{GHW15b}, but undecidable for the imperfect information case~\cite{GPW16}.
	Similarly, regarding \LDLF goals, the synthesis problem is decidable for both perfect~\cite{DV15} and imperfect~\cite{DV16} information, while we here prove that the \textsc{NE-Nonemptiness} problem is decidable.
	This suggests that the same problem might be undecidable under the imperfect information assumption, as it is for conventional iBG{s}.
	However, as the expressive power of \LDLF is incomparable with the one of \LTL, it is not clear whether the undecidability proof (which strongly relies on the expressiveness of \LTL) can be retained in this case.
	Moreover, it has been shown that for specific cases of imperfect information in games with \LTL objectives, the problem might be decidable~\cite{BMM17,BMMRV17}.
	For this reason, we plan to address this question in future work.
	
	\paragraph*{Verification and Equilibria in Logical Form}
	Here we have addressed the ``rational verification'' problem~\cite{rv-aaai16} of multi-agent systems using an automata-theoretic approach.
	We believe that these automata constructions may be used as the underlying
	models of a variant of strategy logic~\cite{MMPV14} (\SL) based on \LDLF over finite traces.
	We will do so in future work.
	\SL is not the only option to investigate. Many other logics for strategic reasoning can be found in the literature; see, {\em e.g.}, \cite{Goranko14}. 
	We believe this is a promising research direction, which would allow us to reason about the behaviour of games over finite traces using other solution concepts in a uniform logical framework.
	A step toward this direction has been recently taken by the formal methods community~\cite{BLMR18a,BLMR18b}.
	
	\paragraph*{Expressiveness and Automata Characterisations}
	Nash equilibrium, and other solution concepts, have been characterised in other works using a range of automata and logical approaches.
	See, for instance, \cite{ChatterjeeHP10,FismanKL10,MMPV14,KPV15} for some examples.
	Although delivering optimal constructions, it is not clear that using such techniques one can characterise the sets of Nash runs in a game. %
	A key feature and contribution of our automata constructions is that they can be used to characterise such sets of equilibrium runs in a uniform way.

\subsection*{Acknowledgments}

	This paper is an extended version of~\cite{GPW17}.
	We acknowledge with gratitude the financial support of the ERC Advanced Investigator grant 291528 (``RACE'') at Oxford.
	Giuseppe Perelli also acknowledges the financial support of the ERC Consolidator Investigator grant 772459 (``d-SynMA'') at G\"{o}teborg.

	\section*{References}
	
%	\vskip 0.2in
	\bibliography{biblio}

\begin{thebibliography}{10}

\bibitem{AH99b}
R.~Alur and T.~A. Henzinger.
\newblock {Reactive Modules}.
\newblock {\em Formal Methods in System Design}, 15(1):7--48, 1999.

\bibitem{AHMQRT98}
R.~Alur, T.~A. Henzinger, F.~Y.~C. Mang, S.~Qadeer, S.~K. Rajamani, and
  S.~Tasiran.
\newblock {{MOCHA:} Modularity in Model Checking.}
\newblock In {\em {CAV}'98}, pages 521--525, 1998.

\bibitem{BLMR18a}
F.~Belardinelli, A.~Lomuscio, A.~Murano, and S.~Rubin.
\newblock Alternating-time temporal logic on finite traces.
\newblock In {\em IJCAI'18}, pages 77--83, 2018.

\bibitem{BMM17}
R.~Berthon, B.~Maubert, and A.~Murano.
\newblock {Decidability Results for ATL* with Imperfect Information and Perfect
  Recall}.
\newblock In {\em AAMAS'17}, pages 1250--1258, 2017.

\bibitem{BMMRV17}
R.~Berthon, B.~Maubert, A.~Murano, S.~Rubin, and M.~Y. Vardi.
\newblock {Strategy Logic with Imperfect Information}.
\newblock In {\em LICS'17}, pages 1--12, 2017.

\bibitem{BozzelliMP15}
L.~Bozzelli, B.~Maubert, and S.~Pinchinat.
\newblock Uniform strategies, rational relations and jumping automata.
\newblock {\em Information and Computation}, 242:80--107, 2015.

\bibitem{Goranko14}
N.~Bulling, V.~Goranko, and W.~Jamroga.
\newblock {\em Modeling Strategic Reasoning}, chapter Logics for reasoning
  about strategic abilities in multi-player games, pages 93--136.
\newblock Springer, 2014.

\bibitem{ChatterjeeHP10}
K.~Chatterjee, T.~Henzinger, and N.~Piterman.
\newblock Strategy logic.
\newblock {\em Information and Computation}, 208(6):677--693, 2010.

\bibitem{DV15}
G.~{De Giacomo} and M.~Vardi.
\newblock Synthesis for {LTL} and {LDL} on finite traces.
\newblock In {\em {IJCAI}}, pages 1558--1564, 2015.

\bibitem{DV13}
G.~{De Giacomo} and M.~Y. Vardi.
\newblock Linear temporal logic and linear dynamic logic on finite traces.
\newblock In {\em {IJCAI}}, pages 854--860. {IJCAI/AAAI}, 2013.

\bibitem{DV16}
G.~{De Giacomo} and M.~Y. Vardi.
\newblock Ltl\({}_{\mbox{f}}\) and ldl\({}_{\mbox{f}}\) synthesis under partial
  observability.
\newblock In {\em IJCAI}, pages 1044--1050. {IJCAI/AAAI} Press, 2016.

\bibitem{BLMR18b}
{F. Belardinelli and A. Lomuscio and A. Murano and S. Rubin}.
\newblock {Decidable Verification of Multi-agent Systems with Bounded Private
  Actions}.
\newblock In {\em AAMAS'18}, pages 1865--1867, 2018.

\bibitem{FijalkowPS13}
N.~Fijalkow, S.~Pinchinat, and O.~Serre.
\newblock Emptiness of alternating tree automata using games with imperfect
  information.
\newblock In {\em {FSTTCS}}, volume~24 of {\em LIPIcs}, pages 299--311.
  Dagstuhl, 2013.

\bibitem{FismanKL10}
D.~Fisman, O.~Kupferman, and Y.~Lustig.
\newblock Rational synthesis.
\newblock In {\em TACAS}, volume 6015 of {\em LNCS}, pages 190--204. Springer,
  2010.

\bibitem{GutierrezHW13}
J.~Gutierrez, P.~Harrenstein, and M.~Wooldridge.
\newblock Iterated boolean games.
\newblock In F.~Rossi, editor, {\em {IJCAI}}, pages 932--938. {IJCAI/AAAI},
  2013.

\bibitem{GHW15a}
J.~Gutierrez, P.~Harrenstein, and M.~Wooldridge.
\newblock Expresiveness and complexity results for strategic reasoning.
\newblock In {\em CONCUR}, volume~42 of {\em LIPIcs}, pages 268--282. Schloss
  Dagstuhl, 2015.

\bibitem{GHW15b}
J.~Gutierrez, P.~Harrenstein, and M.~Wooldridge.
\newblock Iterated {B}oolean games.
\newblock {\em Information and Computation}, 242:53--79, 2015.

\bibitem{GHW17}
J.~Gutierrez, P.~Harrenstein, and M.~Wooldridge.
\newblock From model checking to equilibrium checking: Reactive modules for
  rational verification.
\newblock {\em Artif. Intell.}, 248:123--157, 2017.

\bibitem{GPW16}
J.~Gutierrez, G.~Perelli, and M.~Wooldridge.
\newblock Imperfect information in reactive modules games.
\newblock In {\em KR}, pages 390--400, 2016.

\bibitem{GPW17}
J.~Gutierrez, G.~Perelli, and M.~Wooldridge.
\newblock Iterated games with {LDL} goals over finite traces.
\newblock In {\em AAMAS}, pages 696--704, 2017.

\bibitem{HHMW01}
P.~Harrenstein, W.~van~der Hoek, J.~Meyer, and C.~Witteveen.
\newblock {Boolean Games}.
\newblock In {\em {TARK}}, pages 287--298, 2001.

\bibitem{KPV14}
O.~Kupferman, G.~Perelli, and M.~Y. Vardi.
\newblock Synthesis with rational environments.
\newblock In {\em EUMAS}, volume 8953 of {\em LNCS}, pages 219--235. Springer,
  2014.

\bibitem{KPV15}
O.~Kupferman, G.~Perelli, and M.~Y. Vardi.
\newblock Synthesis with rational environments.
\newblock {\em Annals of Mathematics and Artificial Intelligence}, 78(1):3--20,
  2016.

\bibitem{KV00}
O.~Kupferman and M.~Y. Vardi.
\newblock {Synthesis with Incomplete Information}.
\newblock In {\em Advances in Temporal Logic}, pages 109--127. Springer, 2000.

\bibitem{KS05}
A.~Ku\v{c}era and J.~Strej\v{c}ek.
\newblock {The Stuttering Principle Revisited}.
\newblock {\em Acta Informatica}, 41(7--8):415--434, 2005.

\bibitem{kwiatkowska:2009a}
M.~Kwiatkowska, G.~Norman, and D.~Parker.
\newblock Prism: Probabilistic model checking for performance and reliability
  analysis.
\newblock {\em ACM SIGMETRICS Performance Evaluation Review}, 36(4):40--45,
  2009.

\bibitem{SRM18}
{L. Sorrentino and S. Rubin and A. Murano}.
\newblock {Graded CTL* over Finite Paths}.
\newblock In {\em ICTCS'18}, pages 152--161, 2018.

\bibitem{MS03}
N.~Markey and P.~Schnoebelen.
\newblock Model checking a path.
\newblock In {\em CONCUR'03}, pages 248--262, 2003.

\bibitem{MMPV14}
F.~Mogavero, A.~Murano, G.~Perelli, and M.~Y. Vardi.
\newblock Reasoning about strategies: On the model-checking problem.
\newblock {\em {ACM} Transaction on Computational Logic}, 15(4):34:1--34:47,
  2014.

\bibitem{NW98}
D.~Niwinski and I.~Walukiewicz.
\newblock Relating hierarchies of word and tree automata.
\newblock In {\em STACS}, volume 1373 of {\em LNCS}, pages 320--331. Springer,
  1998.

\bibitem{OR94}
M.~Osborne and A.~Rubinstein.
\newblock {\em {A Course in Game Theory.}}
\newblock MIT Press, 1994.

\bibitem{Pnu77}
A.~Pnueli.
\newblock The temporal logic of programs.
\newblock In {\em FOCS}, pages 46--57. {IEEE} Computer Society, 1977.

\bibitem{PR89}
A.~Pnueli and R.~Rosner.
\newblock {On the Synthesis of a Reactive Module.}
\newblock In {\em POPL}, pages 179--190. ACM, 1989.

\bibitem{SVW87}
A.~Sistla, M.~Vardi, and P.~Wolper.
\newblock {The Complementation Problem for B{\"u}chi Automata with Applications
  to Temporal Logic.}
\newblock {\em Theoretical Computer Science}, 49:217--237, 1987.

\bibitem{CS85}
A.~P. Sistla and E.~M. Clarke.
\newblock {The Complexity of Propositional Linear Temporal Logics}.
\newblock {\em Journal of the ACM}, 32(3):733--749, 1985.

\bibitem{HLW06}
W.~van~der Hoek, A.~Lomuscio, and M.~Wooldridge.
\newblock {On the Complexity of Practical {ATL} Model Checking.}
\newblock In {\em AAMAS}, pages 201--208, 2006.

\bibitem{Var95}
M.~Vardi.
\newblock {An Automata-Theoretic Approach to Linear Temporal Logic}.
\newblock In {\em {Logics for Concurrency - Structure versus Automata}}, pages
  238--266, 1995.

\bibitem{rv-aaai16}
M.~Wooldridge, J.~Gutierrez, P.~Harrenstein, E.~Marchioni, G.~Perelli, and
  A.~Toumi.
\newblock Rational verification: From model checking to equilibrium checking.
\newblock In {\em {AAAI}}, pages 4184--4191. {AAAI} Press, 2016.

\end{thebibliography}

\end{document}